\newcommand{\nn}{f}
\newcommand{\nnunder}{\underline{f}}
\newcommand{\indim}{d}
\newcommand{\outdim}{m}
\newcommand{\inpoint}{x}
\newcommand{\outpoint}{y}
\newcommand{\indomain}{\mathcal{C}}
\newcommand{\intervar}{z^{(i)}_{j}}
\newcommand{\spec}{g}
\newcommand{\net}{f}
\newcommand{\specnum}{K}
\newcommand{\preimage}{\nn^{-1}}
\newcommand{\polytopeset}{\mathcal{T}}
\newcommand{\polytope}{T}
\newcommand{\polytopenum}{D}
\newcommand{\threshold}{v}
\newcommand{\inset}{I}
\newcommand{\outset}{O}
\newcommand{\proportion}{p}
\newcommand{\numsamples}{N}
\newcommand{\nnslope}{\bm{\alpha}}
\newcommand{\nnslopesingle}{\alpha}
\newcommand{\weight}{\textbf{W}}
\newcommand{\bias}{\textbf{b}}
\newcommand{\preact}{h}
\newcommand{\postact}{a}
\newcommand{\numlayers}{L}
\newcommand{\relu}{\textnormal{ReLU}}
\newcommand{\lowerweight}{\underline{\textbf{A}}}
\newcommand{\lowerbias}{\underline{\textbf{b}}}
\newcommand{\lowerweightsingle}{\underline{a}}
\newcommand{\lowerbiassingle}{\underline{b}}
\newcommand{\upperweight}{\overline{\textbf{A}}}
\newcommand{\upperbias}{\overline{\textbf{b}}}
\newcommand{\upperweightsingle}{\overline{a}}
\newcommand{\upperbiassingle}{\overline{b}}
\newcommand{\concretelower}{\textbf{l}}
\newcommand{\concreteupper}{\textbf{u}}
\newcommand{\concretelowersingle}{l}
\newcommand{\concreteuppersingle}{u}
\newcommand{\featurelowerbound}{\underline{x}}
\newcommand{\featureupperbound}{\overline{x}}
\newcommand{\lincon}{\psi}
\newcommand{\boxcon}{\phi}
\newcommand{\fol}{A}
\newcommand{\folunder}{\alpha}
\newcommand{\folover}{\alpha}
\newcommand{\volume}{\textnormal{vol}}
\newcommand{\cov}{\textnormal{cov}}
\newcommand{\algo}{QV}
\newcommand{\linconw}{c}
\newcommand{\linconb}{d}
\newcommand{\maxiterations}{R}
\newcommand{\targetcov}{r}
\begin{document}

\title{Provable Preimage Under-Approximation for Neural Networks}

\titlerunning{Provable Preimage Under-Approximation for Neural Networks}

\author{Xiyue Zhang\inst{(}\Envelope\inst{)}  \and
Benjie Wang\inst{} \and
Marta Kwiatkowska\inst{}}

\authorrunning{X. Zhang et al.}
%
\institute{Department of Computer Science, University of Oxford, Oxford, UK \\
\email{\{xiyue.zhang, benjie.wang, marta.kwiatkowska\}@cs.ox.ac.uk}}
\maketitle              

\begin{abstract}
Neural network verification mainly focuses on local robustness properties, which can be checked by bounding the image (set of outputs) of a given input set. 
However, often it is important to know whether a given property holds globally for the input domain, and if not then for what proportion of the input the property is true. 
To analyze such properties requires computing \emph{preimage} abstractions of neural networks.
In this work, we propose an efficient anytime algorithm for generating symbolic under-approximations of the preimage of any polyhedron output set for neural networks.
Our algorithm combines a novel technique for cheaply computing polytope preimage under-approximations using linear relaxation, with a carefully-designed refinement procedure that iteratively partitions the input region into subregions using input and ReLU splitting in order to improve the approximation.
Empirically, we validate the efficacy of our method across a range of domains, including a high-dimensional MNIST classification task beyond the reach of existing preimage computation methods. Finally, as use cases, we showcase the application to quantitative verification and robustness analysis. We present a sound and complete algorithm for the former, which exploits our disjoint union of polytopes representation to provide formal guarantees. For the latter, we find that our method can provide useful quantitative information even when standard verifiers cannot verify a robustness property. 
\end{abstract}

\section{Introduction}
\label{sec:introduction}
Despite the remarkable empirical success of neural networks, guaranteeing their correctness, especially when using them as decision-making components in safety-critical autonomous systems~\cite{Bojarski16AutoControl,Codevilla18,Yun17Robotics}, is an important and challenging task.
Towards this aim, various approaches have been developed for the verification of neural networks, with extensive effort devoted to local robustness verification \cite{huang2017safety,katz2017reluplex,zhang2018crown,bunel2018unified,tjeng2019evaluating,singh2019deeppoly,xu2020automated,xu2021fast,wang2021beta}.
While local robustness verification focuses on deciding 
the absence of adversarial examples within an $\epsilon$-perturbation neighbourhood, an alternative approach for neural network analysis is to construct the preimage of its predictions \cite{Matoba20Exact,Dathathri19Inverse}. 
Given a set of outputs, the preimage is defined as the set of all inputs mapped by the neural network to that output set. By characterizing the preimage 
symbolically in an abstract representation, e.g., polyhedra, one can perform more complex analysis for a wider class of properties beyond local robustness, such as computing the \emph{proportion} of inputs satisfying a property (quantitative verification) even if standard robustness verification fails.

Exact preimage generation~\cite{Matoba20Exact} is intractable, taking time exponential in the number of neurons in a network; thus approximations are necessary.
Unfortunately, existing methods are limited in their applicability.
The inverse abstraction method in \cite{Dathathri19Inverse}
bypasses the intractability of exact preimage generation by leveraging symbolic interpolants \cite{craig1957inter,Albarghouthi13interpolants} for abstraction of neural network layers.
However, due to the complexity of interpolation, the time
to compute the abstraction also scales exponentially with the number of neurons in hidden layers.
A concurrent work \cite{ProveBound} proposed an input bounding algorithm targeting backward reachability analysis for control policies and out-of-distribution (OOD) detection in low-dimensional domains. Their method produces a preimage \emph{over-approximation}, which cannot be used for quantitative verification.
Therefore, more efficient and flexible computation methods for (symbolic abstraction of) preimages of neural networks are needed.

The main contribution of this paper is a scalable method for preimage approximation, which can be used for a variety of robustness analysis tasks.  
More specifically, we propose an efficient \textit{anytime} algorithm for generating symbolic under-approximations of the preimage of piecewise linear neural networks as a union of disjoint polytopes. 
The algorithm computes a sound preimage under-approximation leveraging linear relaxation based perturbation analysis (LiRPA) \cite{xu2020automated,xu2021fast,singh2019deeppoly}, applied backwards from a polyhedron output set. 
It iteratively refines the preimage approximation by adding input and/or intermediate (ReLU) splitting (hyper)planes to partition the input region into disjoint subregions, which can be approximated independently in parallel in a divide-and-conquer approach. 
The refinement scheme uses a novel differential objective to optimize the quality (volume) of the polytope subregions.
We also show that our method can be generalized to generate preimage over-approximations.
We illustrate the application of our method to quantitative verification,
input bounding for control tasks, and robustness analysis against adversarial and patch attacks.  
Finally, we conduct an empirical analysis 
on a range of control and computer vision tasks, showing significant gains in efficiency compared to exact preimage generation methods and scalability to high-input-dimensional tasks compared to existing preimage approximation methods.

Proofs and additional technical details are presented in the Appendix.

\section{Preliminaries}
\label{sec:pre}
We use 
$\net: \mathbb{R}^{\indim} \to \mathbb{R}^{\outdim}$
to denote a feedforward neural network. For layer $i$, we use $\weight^{(i)}$ to denote the weight matrix, $\bias^{(i)}$ 
the bias, $\preact^{(i)}$ 
the pre-activation neurons, and $\postact^{(i)}$ 
the post-activation neurons, such that we have $\preact^{(i)} = \weight^{(i)} \postact^{(i-1)} + \bias^{(i)}$. 
In this paper, we focus on ReLU neural networks with $\postact^{(i)}(\inpoint)=ReLU(\preact^{(i)}(\inpoint))$, where $\relu(\preact) := \max(\preact, 0)$ is applied element-wise.
However, our method can be generalized to other activation functions bounded by linear relaxation~\cite{zhang2018crown}.

\textbf{Linear Relaxation of Neural Networks.}
Nonlinear activation functions lead to the NP-completeness of the neural network verification problem \cite{katz2017reluplex}. 
To address such intractability, linear relaxation is often used to transform the nonconvex constraints into linear programs. 
As shown in Figure~\ref{fig:linear_relaxation}, given  \textit{concrete}  lower and upper bounds $\concretelower^{(i)}\leq \preact^{(i)}(\inpoint) \leq \concreteupper^{(i)}$ on the pre-activation values of layer $i$, there are three cases to consider.
In the \emph{inactive} ($\concreteuppersingle^{(i)}_j \leq 0$) and \emph{active} ($\concretelowersingle^{(i)}_j \ge 0)$ cases, the post-activation neurons $\postact^{(i)}_j(\inpoint)$ are linear functions $\postact^{(i)}_j(\inpoint) = 0$ and $\postact^{(i)}_j(\inpoint) = \preact^{(i)}_j(\inpoint)$ respectively.
In the \emph{unstable} case, $\postact^{(i)}_j(\inpoint)$ can be bounded by 
$
\nnslopesingle^{(i)}_j \preact^{(i)}_j(\inpoint) \leq \postact^{(i)}_j(\inpoint) \leq 
-\frac{\concreteuppersingle^{(i)}_j\concretelowersingle^{(i)}_j}{\concreteuppersingle^{(i)}_j - \concretelowersingle^{(i)}_j}
    + \frac{\concreteuppersingle^{(i)}_j}{\concreteuppersingle^{(i )}_j - \concretelowersingle^{(i )}_j} \preact^{(i)}_j(\inpoint)
$,
where $\alpha^{(i)}_j$ is a configurable parameter that produces a valid lower bound for any value in 
$[0,1]$. 
Linear bounds can also be obtained for other non-piecewise linear activation functions~\cite{zhang2018crown}.

\begin{figure}[t]
     \centering
     \begin{subfigure}[b]{0.2\textwidth}
         \centering
         \includegraphics[width=\textwidth]{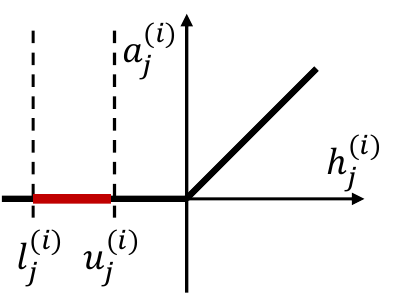}
         \label{fig:negative}
     \end{subfigure}
     \quad
     \begin{subfigure}[b]{0.2\textwidth}
         \centering
         \includegraphics[width=\textwidth]{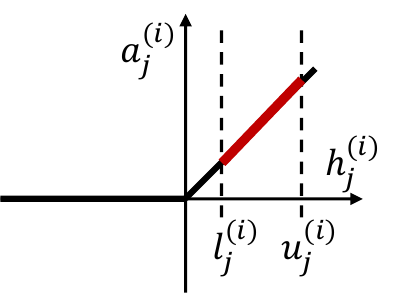}
         \label{fig:positive}
     \end{subfigure}
     \quad
     \begin{subfigure}[b]{0.2\textwidth}
         \centering
         \includegraphics[width=\textwidth]{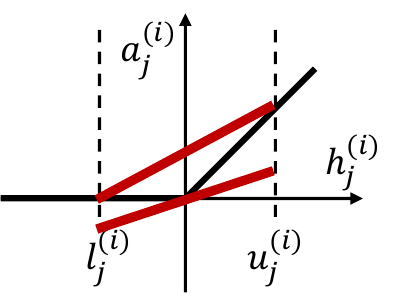}
         \label{fig:unstable}
     \end{subfigure}
        \caption{Linear bounding functions for inactive, active, unstable ReLU neurons.}
        \label{fig:linear_relaxation}
\end{figure}

Linear relaxation can be used to compute linear upper and lower bounds of the form $\lowerweight \inpoint + \lowerbias \leq f(\inpoint) \leq \upperweight \inpoint + \upperbias$ on the output of a neural network, for a given bounded input region $\indomain$.
These methods are known as linear relaxation based perturbation analysis (LiRPA) algorithms \cite{xu2020automated,xu2021fast,singh2019deeppoly}. In particular,
\emph{backward-mode} LiRPA computes linear bounds on 
$f$
by propagating linear 
bounding functions 
backward from the output, layer-by-layer, to the input layer.

\textbf{Polytope Representations.}
Given an Euclidean space $\mathbb{R}^{\indim}$, a polyhedron $\polytope$ is defined to be the intersection of a set of half spaces. More formally, suppose we have a set of linear constraints defined by $\lincon_i(\inpoint) := \linconw_i^T \inpoint + \linconb_i \geq 0$ for $i = 1, ... \specnum$, where $\linconw_i \in \mathbb{R}^{\indim}, \linconb_i \in \mathbb{R}$ are constants, and $\inpoint = \inpoint_1, ..., \inpoint_\indim$ is a set of variables. Then a polyhedron is defined as $\polytope = \{\inpoint \in \mathbb{R}^{\indim}| \bigwedge_{i = 1}^{\specnum} \lincon_i(\inpoint) \}$,
where $\polytope$ consists of all values of $\inpoint$ satisfying the first-order logic (FOL) formula $\folunder(\inpoint) := \bigwedge_{i = 1}^{\specnum} \lincon_i(\inpoint)$.
We use the term polytope to refer to a bounded polyhedron, that is, a polyhedron $\polytope$ such that $\exists R \in \mathbb{R}^{> 0} : \forall \inpoint_1, \inpoint_2 \in \polytope$, $\norm{\inpoint_1 - \inpoint_2}_2 \le R$ holds. 
The abstract domain of polyhedra \cite{singh2019deeppoly,Benoy02Polyhedral,Boutonnet19Polyhedron} has been widely used for the verification of neural networks and computer programs. 
An important type of polytope is the hyperrectangle (box), which is a polytope defined by a closed and bounded interval $[\underline{\inpoint_i}, \overline{\inpoint_i}]$ for each dimension, where $\underline{\inpoint_i}, \overline{\inpoint_i} \in \mathbb{Q}$. More formally, using the linear constraints $\boxcon_i := (\inpoint_i \geq \underline{\inpoint_i}) \wedge (\inpoint_i \leq \overline{\inpoint_i})$ for each dimension, the hyperrectangle takes the form $\indomain = \{\inpoint \in \mathbb{R}^{\indim} | \inpoint \models \bigwedge_{i = 1}^{\indim} \boxcon_i \}$.

\section{Problem Formulation}
\label{sec:problem_definition}

\subsection{Preimage Approximation}

In this work, we are interested in the problem of computing preimages for neural networks. Given a subset $\outset \subset \mathbb{R}^{\outdim}$ of the codomain, the preimage of a function $\nn: \mathbb{R}^{\indim} \to \mathbb{R}^{\outdim}$ is defined to be the set of all inputs $\inpoint \in \mathbb{R}^{\indim}$ that are mapped to an element of $\outset$ by $\nn$. For neural networks in particular, the input is typically restricted to some bounded input region $\indomain \subset \mathbb{R}^{\indim}$. In this work, we restrict the output set $\outset$ to be a polyhedron, and the input set $\indomain$ to be an axis-aligned hyperrectangle region $\indomain \subset \mathbb{R}^{\indim}$, as these are commonly used in neural network verification. We now define the notion of a restricted preimage:

\begin{definition}[Restricted Preimage]
Given a neural network $\nn: \mathbb{R}^{\indim} \to \mathbb{R}^{\outdim}$, and an input set $\indomain \subset \mathbb{R}^{\indim}$, the restricted preimage of an output set $\outset \subset \mathbb{R}^{\outdim}$ is defined to be the set $\nn^{-1}_{\indomain}(\outset) := \{\inpoint \in \mathbb{R}^{\indim}| \nn(\inpoint) \in \outset \wedge \inpoint \in \indomain\}$.
\end{definition}

\begin{example}\label{eg:formulation}
To illustrate our problem formulation and approach,
we introduce a vehicle parking task~\cite{Ayala11vehicle} as a running example.  
In this task, there are four parking lots, located in each quadrant of a $2\times 2$ grid $[0,2]^2$, and a neural network with two hidden layers of 10 ReLU neurons $\nn: \mathbb{R}^2 \to \mathbb{R}^4$ is trained to classify which parking lot an input point belongs to.
To analyze the behaviour of the neural network in the input region $[0, 1] \times [0, 1]$ corresponding to parking lot 1, we set $\indomain = \{\inpoint \in \mathbb{R}^2 | (0 \leq \inpoint_1 \leq 1) \wedge (0 \leq \inpoint_2 \leq 1)\}$. Then the restricted preimage $\nn^{-1}_{\indomain}(\outset)$ of the set $\outset = \{\bm{\outpoint} \in \mathbb{R}^4 | \bigwedge_{i \in \{2, 3, 4\}} \outpoint_1 - \outpoint_i \geq 0\} $ is the subspace of the region $[0, 1] \times [0, 1]$ that is \textit{labelled} as parking lot $1$ by the network. 
\end{example}

We focus on \emph{provable} approximations of the preimage. Given a first-order formula $\fol$, $\folunder$ is an \emph{under-approximation} (resp. \emph{over-approximation}) of $\fol$ if it holds that $\forall \inpoint. \folunder(\inpoint) \implies \fol(\inpoint)$ (resp. $\forall \inpoint. \fol(\inpoint) \implies \folover(\inpoint)$). 
In our context, the restricted preimage is defined by the formula $\fol(\inpoint) = (\nn(\inpoint) \in \outset) \wedge (\inpoint \in \indomain)$, and we restrict to approximations $\folunder$ that take the form of a disjoint union of polytopes (DUP). 
The goal of our method is to generate a DUP approximation $\polytopeset$ that is as tight as possible; that is, to maximize the volume $\volume(\polytopeset)$ of an under-approximation, or minimize the volume $\volume(\polytopeset)$ of an over-approximation.

\begin{definition}[Disjoint Union of Polytopes]
A disjoint union of polytopes (DUP) is a FOL formula $\folunder$ of the form
$    \folunder(\inpoint) := \bigvee_{i = 1}^{\polytopenum} \folunder_i(\inpoint)$, 
where each $\folunder_i$ is a polytope formula 
(conjunction of a finite set of linear half-space constraints), with the property that $\folunder_i \wedge \folunder_j$ is unsatisfiable for any $i \neq j$.
\end{definition}

\subsection{Quantitative Properties} 

One of the most important verification problems for neural networks is that of proving guarantees on the output of a network for a given input set \cite{gehr2018ai2,gopinath2019properties,ruan2018reachability}. This is often expressed as a property of the form $(\inset, \outset)$ such that $\forall \inpoint \in \inset \implies \nn(\inpoint) \in \outset$. We can generalize this to \emph{quantitative} properties:

\begin{definition}[Quantitative Property]
    Given a neural network $\nn: \mathbb{R}^{\indim} \to \mathbb{R}^{\outdim}$, a measurable input set with non-zero measure (volume) $\inset \subseteq \mathbb{R}^{\indim}$, a measurable output set $\outset \subseteq \mathbb{R}^{\outdim}$, and a rational proportion $\proportion \in [0, 1]$ we say that the neural network satisfies the property $(\inset, \outset, \proportion)$ if $\frac{\volume(\nn^{-1}_{\inset}(\outset))}{\volume(\inset)} \geq \proportion$. \footnote{In particular, the restricted preimage of a polyhedron under a neural network is Lebesgue measurable since polyhedra 
    (intersection of a finite set of half-spaces) are Borel measurable and NNs are continuous functions.}
\end{definition}

Neural network verification algorithms~\cite{liu2021algorithms} can be divided into two categories: sound, which always return correct results, and complete, guaranteed to reach a conclusion on any verification query.
We now define soundness and completness of verification algorithms for quantitative properties.

\begin{definition}[Soundness]
    A verification algorithm $\algo$ is sound if, whenever $\algo$ outputs \textnormal{True}, the property $(\inset, \outset, \proportion)$ holds.
\end{definition}

\begin{definition}[Completeness]
    A verification algorithm $\algo$ is complete if (i) $\algo$ never returns \textnormal{Unknown}, and (ii) whenever $\algo$ outputs \textnormal{False}, the property $(\inset, \outset, \proportion)$ does not hold. 
\end{definition}

If the property $(\inset, \outset)$ holds, then the quantitative property $(\inset, \outset, 1)$ holds, while quantitative properties for $0 \leq  \proportion < 1$ provide more information when $(\inset, \outset)$ does not hold. 
Most neural network verification methods produce approximations of the \emph{image} of $\inset$ in the output space, which cannot be used to verify quantitative properties. 
Preimage \textit{over-approximations} include false regions, thereby not applicable for quantitative verification.
In contrast, preimage \emph{under-approximations} provide a lower bound on the volume of the preimage, allowing us to soundly verify quantitative properties.

\section{Methodology}
\label{sec:method}

\subsubsection{Overview.} 
In this section we present the main components of our methodology.  
Firstly, in Section \ref{sec:poly_gen}, we show how to cheaply and soundly under-approximate the (restricted) preimage with a single polytope, using linear relaxation methods (Algorithm \ref{alg:genunderapprox}). Secondly, in Section \ref{sec:local_opt}, we propose a novel differentiable objective to optimize the quality (volume) of the polytope under-approximation. Thirdly, in Section \ref{sec:branching}, we propose a refinement scheme that improves the approximation by partitioning a (sub)region into subregions with splitting planes, with each subregion then being under-approximated more accurately.
The main contribution of this paper 
(Algorithm \ref{alg:main}) integrates these three components and is described in Section \ref{sec:overall}. 
Finally, in Section \ref{sec:verif}, we apply our method to quantitative verification (Algorithm \ref{alg:verify}) and prove its soundness and completeness.

\begin{algorithm}[htb]
\small
\caption{Preimage Approximation}\label{alg:main}
\KwIn{Neural network $f$, Input region $\indomain$, Output region $\outset$, Volume threshold $\threshold$, Maximum iterations $\maxiterations$, Boolean $SplitOnInput$ }
\KwOut{Disjoint union of polytopes $\polytopeset$}
$\polytope$ $\leftarrow$ GenUnderApprox($\indomain$, $\outset$) \tcp*{Initial preimage polytope}\label{algline:initial_polytope}
$\widehat{\volume}_{\polytope}, \widehat{\volume}_{\preimage_{\indomain}(\outset)} \leftarrow $ EstimateVol($\polytope$), EstimateVol($\preimage_{\indomain}(\outset)$) \label{algline:initial_volume}\;
Dom $\leftarrow \{(\indomain, \polytope, \widehat{\volume}_{\preimage_{\indomain}(\outset)} - \widehat{\volume}_{\polytope})\}$ \tcp*{Priority queue}\label{algline:init_queue} 
\tcp{$\polytopeset_{\textnormal{Dom}}$ is the union of polytopes in Dom}
\While{$\textnormal{EstimateVol}(\polytopeset_{\textnormal{Dom}}) < \threshold$ \textnormal{\textbf{and}} $\textnormal{Iterations}\leq \maxiterations$ \label{algline:while}}{ 
    $\indomain_{\text{sub}}, \polytope, \text{Priority}$ $\leftarrow$ Pop(Dom) \label{algline:refine_start}\tcp*{Subregion with highest priority}
    \If{SplitOnInput}
    {
    $id$ $\leftarrow$ SelectInputFeature($\text{Feature}_{I}$) \tcp*{$\text{Feature}_{I}$ is the set of input features/dimensions}\label{algline:input_selection}
  }
  \Else{
    $id \leftarrow$ SelectReLUNode($\text{Node}_{Z}$)\tcp*{$\text{Node}_{Z}$ is the set of unstable ReLU nodes}\label{algline:relu_selection}
  }
[$\indomain_{sub}^{l}$,$\indomain_{sub}^{u}$] 
 $\leftarrow$ SplitOnNode($\indomain_{sub}$, $id$)\label{algline:split_node_id}\tcp*{Split on the selected node}
  [$\polytope^{l}, \polytope^{u}$] $\leftarrow$ GenUnderApprox([$\indomain_{sub}^{l}$,$\indomain_{sub}^{u}$],  $\outset$) \tcp*{Generate preimage}\label{algline:subdomain_polytope}
     [$\widehat{\volume}_{\polytope^{l}}, \widehat{\volume}_{\polytope^{u}}$] $\leftarrow$ EstimateVol([$\polytope^{l}, \polytope^{u}$])\label{algline:polytope_volume}\;
$\widehat{\volume}_{\preimage_{\indomain_{sub}^{l}}(\outset)}, \widehat{\volume}_{\preimage_{\indomain_{sub}^{u}}(\outset)} \leftarrow \textnormal{EstimateVol}(\preimage_{\indomain_{sub}^{l}}(\outset)), \textnormal{EstimateVol}(\preimage_{\indomain_{sub}^{u}}(\outset))$ \label{algline:preimage_volume}\;
    Dom $\leftarrow$ Dom $\cup$ \{($\indomain_{sub}^{l}, \polytope^{l}$,$\widehat{\volume}_{\preimage_{\indomain_{sub}^{l}}(\outset)} - \widehat{\volume}_{\polytope^{l}}$)\} $\cup$ \{($\indomain_{sub}^{u}, \polytope^{u}$,$\widehat{\volume}_{\preimage_{\indomain_{sub}^{u}}(\outset)} - \widehat{\volume}_{\polytope^{u}}$)\}\label{algline:refine_end}\tcp*{Disjoint polytope} 
    } \label{algline:update_approximation}
\KwRet{$\polytopeset_{\textnormal{Dom}}$}
\end{algorithm}

\subsection{Polytope Under-Approximation via Linear Relaxation}
\label{sec:poly_gen}

\begin{algorithm}[tb]
\caption{GenUnderApprox}\label{alg:genunderapprox}
\KwIn{List of subregions $\indomain$, Output set $\outset$, number of samples $\numsamples$}
\KwOut{List of polytopes $\textbf{\polytope}$}

$\textbf{\polytope} = []$\;
\For{\textnormal{subregion} $\indomain_{sub} \in \indomain$ \tcp{Parallel over subregions}}  {
    $[\underline{\spec_1}(\inpoint, \nnslope_1), ..., \underline{\spec_{\specnum}}(\inpoint, \nnslope_{\specnum})] \leftarrow$ LinearLowerBound($\indomain_{sub}, \outset$)\; \label{algline:linearbound}
    $\inpoint_1, ..., \inpoint_{\numsamples} \leftarrow $ Sample($\indomain_{sub}, \numsamples$)\;
   Loss$(\nnslope_1, ..., \nnslope_{\specnum}) \leftarrow -\sum_{j = 1, ..., \numsamples} \sigma(-\textnormal{LSE}(-\underline{\spec_1}(\inpoint_j, \nnslope_1), ..., -\underline{\spec_{\specnum}}(\inpoint_j, \nnslope_{\specnum}))$\label{algline:loss}\;
    $\nnslope_1^*, ..., \nnslope_{\specnum}^* \leftarrow \textnormal{Optimize}(\textnormal{Loss}(\nnslope_1, ..., \nnslope_{\specnum}))$\;
    $\textbf{\polytope} = \textnormal{Append}(\textbf{\polytope}, [\underline{\spec_1}(\inpoint, \nnslope_1^*) \geq 0, ..., \underline{\spec_{\specnum}}(\inpoint, \nnslope_{\specnum}^*) \geq 0, \inpoint \in \indomain_{sub}])$ \label{algline:append}
}

\KwRet{$\textbf{\polytope}$}
\end{algorithm}
We first show how to adapt linear relaxation techniques to efficiently generate valid under-approximations to the restricted preimage for a given input region $\indomain$. Recall that LiRPA methods enable us to obtain linear lower and upper bounds on the output of a neural network $\nn$, that is, $\lowerweight \inpoint + \lowerbias \leq \net(\inpoint) \leq \upperweight \inpoint + \upperbias$, where the linear coefficients depend on the input region $\indomain$.

Now, suppose that we are interested in computing an under-approximation to the restricted preimage, given the input hyperrectangle $\indomain = \{\inpoint \in \mathbb{R}^{\indim} | \inpoint \models \bigwedge_{i = 1}^{\indim} \boxcon_i \}$, and the output polytope specified using the half-space constraints $\lincon_i(\outpoint) = (\linconw_i^{T} \outpoint + \linconb_i \geq 0)$ for $ i = 1, ..., \specnum$ over the output space. Given a constraint $\lincon_i$, we append an additional linear layer at the end of the network $\nn$, which maps $\outpoint \mapsto \linconw_i^{T} \outpoint + \linconb_i$, such that the function $\spec_i: \mathbb{R}^{\indim} \to \mathbb{R}$ represented by the new network is $\spec_i(\inpoint) = \linconw_i^{T} \nn(\inpoint) + \linconb_i$. Then, applying LiRPA bounding to each $\spec_i$, we obtain lower bounds $\underline{\spec_i}(\inpoint) = \lowerweightsingle_i^T \inpoint + \lowerbiassingle_i$ for each $i$, such that $\underline{\spec_i}(\inpoint) \geq 0 \implies \spec_i(\inpoint) \geq 0$ for $\inpoint \in \indomain$. 
Notice that, for each $i = 1,..., \specnum$, $\lowerweightsingle_i^T \inpoint + \lowerbiassingle_i \geq 0$ is a half-space constraint in the input space. We conjoin these constraints, along with the restriction to the input region $\indomain$, to obtain a polytope 
    $\polytope_{\indomain}(\outset) := \{\inpoint| \bigwedge_{i =1}^{\specnum} (\underline{\spec_i}(\inpoint)  \geq 0 )\wedge \bigwedge_{i = 1}^{\indim} \boxcon_i(\inpoint) \}$.

\begin{restatable}{proposition}{propUnder}
$\polytope_{\indomain}(\outset)$ is an under-approximation to the restricted preimage $\preimage_{\indomain}(\outset)$.
\end{restatable}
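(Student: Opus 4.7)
The plan is to prove set inclusion $\polytope_{\indomain}(\outset) \subseteq \preimage_{\indomain}(\outset)$ directly: take an arbitrary $\inpoint \in \polytope_{\indomain}(\outset)$ and show that it satisfies both defining conditions of the restricted preimage, namely $\inpoint \in \indomain$ and $\nn(\inpoint) \in \outset$.

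First I would dispatch the easy half. By definition of $\polytope_{\indomain}(\outset)$, any $\inpoint$ in it satisfies the box constraints $\bigwedge_{i=1}^{\indim} \boxcon_i(\inpoint)$, which is precisely the defining formula of the hyperrectangle $\indomain$, so $\inpoint \in \indomain$.

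Next I would handle the output membership. For each $i = 1,\ldots,\specnum$, membership in $\polytope_{\indomain}(\outset)$ gives $\underline{\spec_i}(\inpoint) \geq 0$. The function $\spec_i$ was constructed by appending the linear layer $\outpoint \mapsto \linconw_i^T \outpoint + \linconb_i$ on top of $\nn$, so $\spec_i(\inpoint) = \linconw_i^T \nn(\inpoint) + \linconb_i$, and LiRPA applied on the input region $\indomain$ yields the soundness guarantee $\underline{\spec_i}(\inpoint) \leq \spec_i(\inpoint)$ for all $\inpoint \in \indomain$. Since we already established $\inpoint \in \indomain$, this chain gives $\linconw_i^T \nn(\inpoint) + \linconb_i = \spec_i(\inpoint) \geq \underline{\spec_i}(\inpoint) \geq 0$, i.e., $\lincon_i(\nn(\inpoint))$ holds. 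Taking the conjunction over all $i$ yields $\nn(\inpoint) \in \outset$, as $\outset$ is exactly the polyhedron cut out by these half-spaces.

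There is no real obstacle here: the entire argument rests on the soundness of the LiRPA linear lower bound on the extended networks $\spec_i$, which is the standard guarantee recalled in the Preliminaries (the backward-propagated bound $\lowerweight \inpoint + \lowerbias \leq \spec_i(\inpoint)$ is valid for every $\inpoint$ in the input region used to compute it). The only subtlety worth being explicit about is that the soundness of LiRPA is conditional on $\inpoint \in \indomain$, so the order of the two halves matters: membership in $\indomain$ must be established before invoking the lower-bound inequality. With that ordering fixed, the proof is a one-line composition of the two facts.
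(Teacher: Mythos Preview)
Your proposal is correct and essentially matches the paper's own proof: both unfold the definitions of $\polytope_{\indomain}(\outset)$ and $\preimage_{\indomain}(\outset)$ and invoke the LiRPA soundness guarantee $\underline{\spec_i}(\inpoint) \leq \spec_i(\inpoint)$ on $\indomain$ to conclude the implication $\bigwedge_i(\underline{\spec_i}(\inpoint)\geq 0)\wedge \inpoint\in\indomain \implies \bigwedge_i(\spec_i(\inpoint)\geq 0)\wedge \inpoint\in\indomain$. Your explicit remark about first establishing $\inpoint\in\indomain$ before applying the LiRPA bound is a nice clarification, but otherwise the arguments coincide.
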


\begin{example}\label{eg:underapprox}
Returning to Example \ref{eg:formulation}, the output constraints (for $i = 2, 3, 4$) are given by $\lincon_i = (\outpoint_1 - \outpoint_i \geq 0) = (\linconw_i^{T} \outpoint + \linconb_i \geq 0)$,  where $\linconw_i := e_1 - e_i$ (where $e_i$ is the $i^{\text{th}}$ standard basis vector) and $\linconb_i := 0$. Applying LiRPA bounding, we obtain the linear lower bounds $\underline{\spec_2}(\inpoint) = -3.79 \inpoint_1 + \inpoint_2 + 2.65 \geq 0; \underline{\spec_3}(\inpoint) = 0.34 \inpoint_1 - \inpoint_2 -0.60 \geq 0; \underline{\spec_4}(\inpoint) = -1.11 \inpoint_1 - \inpoint_2 + 1.99 \geq 0$ for each constraint. The intersection of these constraints, shown in Figure \ref{fig:before_optim}, represents the region where any input is guaranteed to satisfy the output constraints.
\end{example}

We generate the linear bounds in parallel over the output polyhedron constraints $i = 1, ..., \specnum$ using the \textit{backward mode} LiRPA \cite{zhang2018crown}, and store the resulting input polytope $\polytope_{\indomain}(\outset)$ as a list of constraints.  
This highly efficient procedure is used as a sub-routine \texttt{LinearLowerBound}  when generating a preimage under-approximation as a polytope union using Algorithm \ref{alg:genunderapprox}
(Line \ref{algline:linearbound}).

\subsection{Local Optimization} 
\label{sec:local_opt}

One of the key components behind the effectiveness of LiRPA-based bounds is the ability to efficiently improve the 
tightness of the bounding function by optimizing the relaxation parameters $\nnslope$, via projected gradient descent. 
In the context of local robustness verification,  
the goal is to optimize the 
concrete lower or upper bounds 
over the (sub)region $\indomain$ \cite{xu2020automated}, i.e.,
$\min_{\inpoint \in \indomain} \lowerweight(\nnslope) \inpoint + \lowerbias(\nnslope)$, where we explicitly note
the dependence of the linear coefficients on $\nnslope$.
In our case, we are instead interested in optimizing $\nnslope$ to refine
the polytope under-approximation, that is, increase its volume.
Unfortunately, computing the volume of a polytope exactly is a computationally expensive task, and requires specialized tools \cite{Augustin22polyvolume} that do not permit easy optimization with respect to the $\nnslope$ parameters.

To address this challenge, we propose to use 
statistical estimation.  
In particular, we sample $\numsamples$ points $\inpoint_1, ..., \inpoint_\numsamples$ uniformly from the input domain $\indomain$
then employ Monte Carlo estimation for the volume of the polytope approximation: 
\begin{align}\label{eq:stat_est}
&\widehat{\volume}(\polytope_{\indomain, \nnslope}(\outset)) = \frac{\sum_{i = 1}^{\numsamples}\mathds{1}_{\inpoint_i \in \polytope_{\indomain, \nnslope}(\outset)}}{\numsamples} \times \text{vol}(\indomain)
\end{align}
where we highlight the dependence of  $\polytope_{\indomain}(\outset) = \{\inpoint| \bigwedge_{i =1}^{\specnum} \underline{\spec_i}(\inpoint, \nnslope_i)  \geq 0 \wedge \bigwedge_{i = 1}^{\indim} \boxcon_i(\inpoint) \}$ on $\nnslope = (\nnslope_1, ..., \nnslope_\specnum)$,
and
$\nnslope_i$ are the $\nnslopesingle$-parameters for the linear relaxation of the neural network $\spec_i$ corresponding to the $i^{\textnormal{th}}$ half-space constraint in $\outset$.
However,
this is still non-differentiable w.r.t. $\nnslope$ due to the identity function. We now show how to derive a differentiable relaxation which is amenable to gradient-based optimization: 
\begin{align*}
    \widehat{\volume}(\polytope_{\indomain, \nnslope}(\outset)) &= \frac{\text{vol}(\indomain)}{\numsamples} \sum_{j = 1}^{\numsamples}\mathds{1}_{\inpoint_j \in \polytope_{\indomain, \nnslope}(\outset)} 
    = \frac{\text{vol}(\indomain)}{\numsamples} \sum_{j = 1}^{\numsamples} \mathds{1}_{\min_{i = 1, ... \specnum} \underline{\spec_i}(\inpoint_j, \nnslope_i) \geq 0}\\
    & \approx \frac{\text{vol}(\indomain)}{\numsamples} \sum_{j = 1}^{\numsamples} \sigma\left(\min_{i = 1, ... \specnum} \underline{\spec_i}(\inpoint_j, \nnslope_i)\right) \\
    & \approx \frac{\text{vol}(\indomain)}{\numsamples} \sum_{j = 1}^{\numsamples} \sigma\left(-\textnormal{LSE}( -\underline{\spec_1}(\inpoint_j, \nnslope_1), ..., -\underline{\spec_\specnum}(\inpoint_j, \nnslope_\specnum))\right)
\end{align*}
The second equality follows from the definition of the polytope $\polytope_{\indomain, \nnslope}(\outset)$; namely that a point is in the polytope if it satisfies $\underline{\spec_i}(\inpoint_j, \nnslope_i) \geq 0$ for all $i = 1, ..., \specnum$, or equivalently, $\min_{i = 1, ... \specnum} \underline{\spec_i}(\inpoint_j, \nnslope_i) \geq 0$. After this, we approximate the identity function using a sigmoid relaxation, where $\sigma(y) := \frac{1}{1 + e^{-y}}$, as is commonly done in machine learning to define classification losses. Finally, we approximate the minimum over specifications using the log-sum-exp (LSE) function. The log-sum-exp function is defined by $LSE(y_1, ..., y_{\specnum}) := \log(\sum_{i = 1, ..., \specnum} e^{y_i})$, and is a differentiable approximation to the maximum function; we employ it to approximate the minimization by adding the appropriate sign changes. The final expression is now a differentiable function of $\nnslope$. We employ this as the loss function in Algorithm \ref{alg:genunderapprox} (Line \ref{algline:loss}) for generating a polytope approximation, and optimize volume using projected gradient descent.

\begin{example}
\label{eg:optim}
    We revisit the vehicle parking problem in Example~\ref{eg:formulation}.
    Figure \ref{fig:before_optim} and \ref{fig:after_optim} show the computed under-approximations before and after local optimization. 
    We can see that the bounding planes for all three specifications are optimized, which 
    effectively improves the approximation quality. 
\end{example}

\subsection{Global Branching and Refinement} \label{sec:branching}

As LiRPA performs crude linear relaxation,
the resulting bounds can be quite loose even with $\nnslope$-optimization, meaning that the 
polytope approximation $\polytope_{\indomain}(\outset)$ is unlikely to constitute a tight under-approximation to the preimage.
To address this challenge, we employ a divide-and-conquer approach that
iteratively refines our under-approximation of the preimage. Starting from the initial region $\indomain$ represented at the root, our method generates a tree by iteratively partitioning a subregion $\indomain_{sub}$ represented at a leaf node into two smaller subregions $\indomain_{sub}^{l}, \indomain_{sub}^{u}$, which are then attached as children to that leaf node. In this way, the subregions represented by all leaves of the tree are disjoint, such that their union is the initial region $\indomain$.

For each leaf subregion $\indomain_{sub}$ we compute, using  LiRPA bounds (Line \ref{algline:linearbound}, Algorithm \ref{alg:genunderapprox}), an associated polytope that under-approximates the preimage in $\indomain_{sub}$. Thus, irrespective of the number of refinements performed, the union of the polytopes corresponding to all leaves forms an \emph{anytime} DUP under-approximation $\polytopeset$ to the preimage in the original region $\indomain$.
The process of refining the subregions  continues until an appropriate termination criterion is met.

Unfortunately, even with a moderate number of input dimensions or unstable ReLU nodes, na\"ively splitting along all input- or ReLU-planes quickly becomes computationally infeasible. For example, splitting a $\indim$-dimensional hyperrectangle using
bisections along each dimension results in $2^d$ subdomains to approximate. It 
thus becomes crucial to identify the subregion splits that have the most impact on the quality of the under-approximation.
Another important aspect is how to prioritize which leaf subregion to split. We describe these in turn.

\textbf{Subregion Selection.}
Searching through all leaf subregions at each iteration is computationally too expensive. Thus, we propose a subregion selection strategy that prioritizes splitting subregions according to (an estimate of) the difference in volume
between the exact preimage $\preimage_{\indomain_{sub}}(\outset)$ and the (already computed) polytope approximation $\polytope_{\indomain_{sub}}(\outset)$ on that subdomain, that is:
\begin{equation} \label{eqn:priority}
\textnormal{Priority}(\indomain_{sub})  = \volume(\preimage_{\indomain_{sub}}(\outset)) - \volume(\polytope_{\indomain_{sub}}(\outset))
\end{equation}
which measures the 
gap between the polytope under-approximation and the optimal approximation, namely, the preimage itself.

Suppose that a particular leaf subdomain attains the maximum of this metric among all leaves, and we partition it into two subregions $\indomain_{sub}^l, \indomain_{sub}^u$, which we approximate with polytopes $\polytope_{\indomain_{sub}^l}(\outset), \polytope_{\indomain_{sub}^u}(\outset)$. 
As tighter intermediate concrete bounds, and thus linear bounding functions, can be computed on the partitioned subregions,
the polytope approximation on each subregion will be 
refined compared with the single polytope restricted to that subregion.

\begin{restatable}{proposition}{propPriority} \textbf{\label{prop:volume_guarantee}}
    Given any subregion $\indomain_{sub}$ with polytope approximation $\polytope_{\indomain_{sub}}(\outset)$, and its children $\indomain_{sub}^l, \indomain_{sub}^u$ with polytope approximations $\polytope_{\indomain_{sub}^l}(\outset), \polytope_{\indomain_{sub}^u}(\outset)$ respectively, it holds that:
    \begin{equation} \label{eqn:no_fragmentation}
        \polytope_{\indomain_{sub}^l}(\outset) \cup \polytope_{\indomain_{sub}^u}(\outset) \supseteq \polytope_{\indomain_{sub}}(\outset)
    \end{equation}
\end{restatable}

\begin{restatable}{corollary}{corVolume}
    In each refinement iteration,
    the volume of the polytope approximation $\polytopeset_{Dom}$ does not decrease.
\end{restatable}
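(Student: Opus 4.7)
The plan is to observe that the corollary is an immediate consequence of the volume inequality (Equation \ref{eqn:volume_ineq}) in Proposition \ref{prop:volume_guarantee}, combined with the fact that the DUP under-approximation $\polytopeset_{Dom}$ decomposes as a disjoint union over leaf subregions of the refinement tree. So the proof will be a short bookkeeping argument identifying exactly which terms in this sum change across one iteration.

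First, I would make explicit that at any point in the execution, the set $\textnormal{Dom}$ indexes a partition of $\indomain$ into hyperrectangle subregions $\{\indomain_{sub}\}$, each paired with its polytope $\polytope_{\indomain_{sub}}(\outset) \subseteq \indomain_{sub}$. Because the subregions are pairwise disjoint and each polytope is contained in its own subregion, the polytopes are themselves pairwise disjoint, and hence
\begin{equation}
\volume(\polytopeset_{Dom}) = \sum_{(\indomain_{sub}, \polytope_{\indomain_{sub}}(\outset), \cdot) \in \textnormal{Dom}} \volume(\polytope_{\indomain_{sub}}(\outset)).
\end{equation}

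Next, I would describe the update performed in one iteration of the main loop (Lines \ref{algline:refine_start}--\ref{algline:refine_end} of Algorithm 1): one entry $(\indomain_{sub}, \polytope_{\indomain_{sub}}(\outset), \cdot)$ is removed from $\textnormal{Dom}$ and replaced by the two entries corresponding to the chosen split $(\indomain_{sub}^{id, l}, \polytope_{\indomain_{sub}^{id,l}}(\outset), \cdot)$ and $(\indomain_{sub}^{id, u}, \polytope_{\indomain_{sub}^{id,u}}(\outset), \cdot)$. All other entries are unchanged, so the change in total volume is precisely
\begin{equation}
\volume(\polytope_{\indomain_{sub}^{id, l}}(\outset)) + \volume(\polytope_{\indomain_{sub}^{id, u}}(\outset)) - \volume(\polytope_{\indomain_{sub}}(\outset)),
\end{equation}
which is nonnegative by Equation \ref{eqn:volume_ineq} of Proposition \ref{prop:volume_guarantee} applied to $\indomain_{sub}$ and its two children. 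Therefore the volume of $\polytopeset_{Dom}$ does not decrease across the iteration, which is the claim.

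There is no real obstacle here: the only subtlety worth flagging in the writeup is the disjointness argument justifying volume additivity over leaves, which in turn relies on the fact that \texttt{SplitOnFeature} performs an axis-aligned bisection so the two children partition their parent exactly, and that each $\polytope_{\indomain_{sub}}(\outset)$ is constrained to lie in $\indomain_{sub}$ by construction in Algorithm 2 (Line \ref{algline:append}). Once this is noted, the corollary follows by a one-line application of the proposition.
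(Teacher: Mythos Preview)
Your proposal is correct and follows essentially the same approach as the paper's proof: observe that an iteration replaces a single leaf polytope $\polytope_{\indomain_{sub}}(\outset)$ by the two polytopes $\polytope_{\indomain_{sub}^l}(\outset), \polytope_{\indomain_{sub}^u}(\outset)$, and apply Equation~\ref{eqn:volume_ineq} from Proposition~\ref{prop:volume_guarantee} to conclude the total volume does not decrease. Your version is more explicit about the disjointness and additivity of the DUP volume, but the underlying argument is identical.
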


Since computing the volumes in Equation \ref{eqn:priority} is intractable, we sample $\numsamples$ points $\inpoint_1, ..., \inpoint_\numsamples$ uniformly from the subdomain $\indomain_{sub}$ and 
employ Monte Carlo estimation to estimate the volume for both the preimage and the polytope approximation using the same set of samples,
i.e., 
$
\widehat{\volume}(\preimage_{\indomain_{sub}}(\outset)) = \text{vol}(\indomain_{sub}) \times \frac{1}{\numsamples} \sum_{i = 1}^{\numsamples}\mathds{1}_{\inpoint_i \in \preimage_{\indomain_{sub}}(\outset)}$, and $
\widehat{\volume}(\polytope_{\indomain_{sub}}(\outset)) = \text{vol}(\indomain_{sub}) \times  \frac{1}{\numsamples}\sum_{i = 1}^{\numsamples}\mathds{1}_{\inpoint_i \in \polytope_{\indomain_{sub}}(\outset)} 
$. 
We stress that volume estimation is only used to prioritize subregion selection, and does not affect the soundness of our method.

\textbf{Input Splitting.}
Given a subregion (hyperrectangle) defined by lower and upper bounds $\inpoint_i \in [\featurelowerbound_i, \featureupperbound_i]$ for all dimensions $i = 1, ..., \indim$, input splitting partitions it
into two subregions by 
cutting along some feature $i$. 
This splitting procedure will produce two subregions
which are similar to the original subregion, but have
updated bounds $[\featurelowerbound_i, \frac{\featurelowerbound_i + \featureupperbound_i}{2}], [\frac{\featurelowerbound_i + \featureupperbound_i}{2}, \featureupperbound_i]$ for feature $i$ instead. 
In order to 
determine which feature/dimension to split on, we propose a greedy  
strategy.
Specifically, for each feature, we generate a pair of polytopes for the two subregions resulting from the split, and choose the feature that results in the greatest total volume of the polytope pair. 
In practice,
another commonly-adopted splitting heuristic  is to select the dimension with the longest edge~\cite{Bunel20BaB}, that is, to select feature $i$ with the largest range: $\arg \max_i (\featureupperbound_i-\featurelowerbound_i)$. However, this method falls short in per-iteration approximation volume improvement compared to our greedy strategy.

\begin{figure}[t]
     \centering
     \begin{subfigure}[b]{0.23\textwidth}
         \centering
         \includegraphics[width=\textwidth]{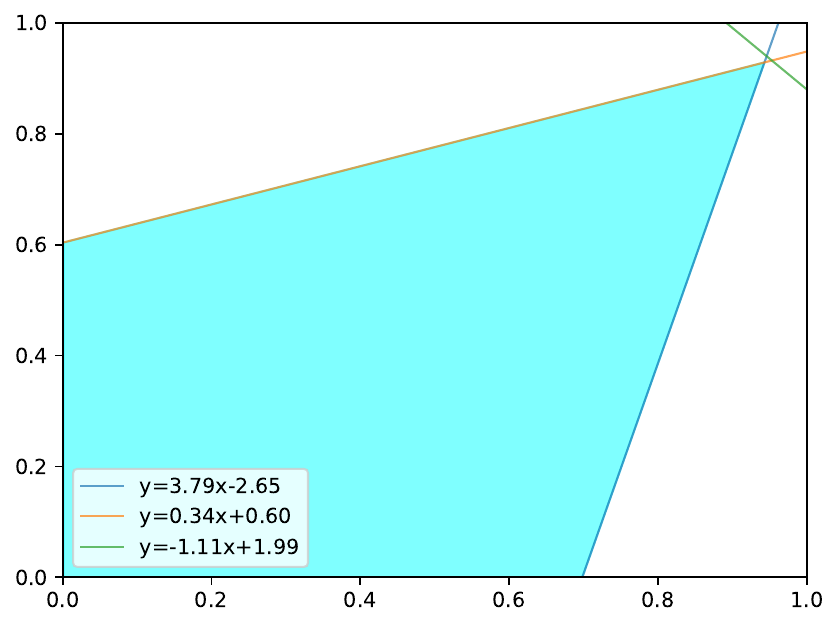}
         \caption{No optimization}
         \label{fig:before_optim}
     \end{subfigure}
     \begin{subfigure}[b]{0.23\textwidth}
         \centering
         \includegraphics[width=\textwidth]{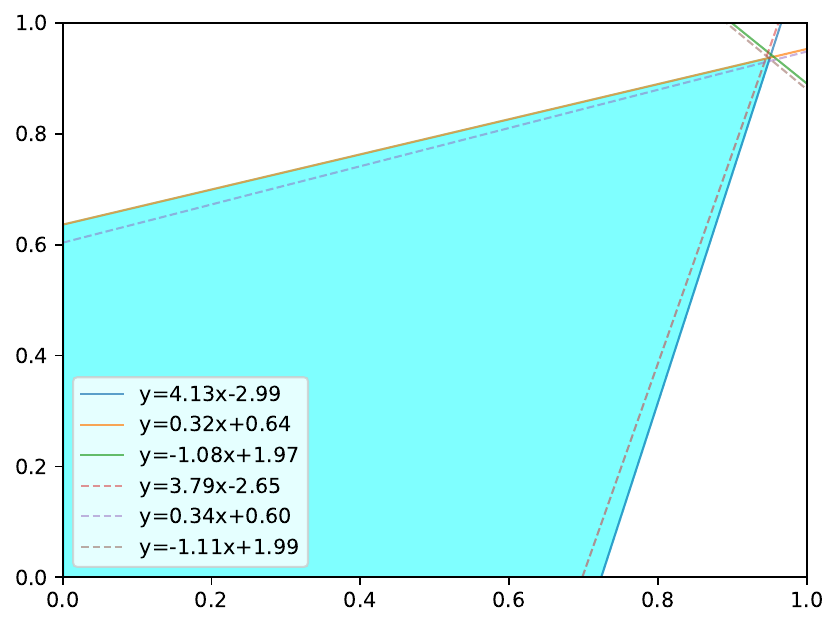}
         \caption{After optimization}
         \label{fig:after_optim}
     \end{subfigure}
     \begin{subfigure}[b]{0.23\textwidth}
         \centering
         \includegraphics[width=\textwidth]{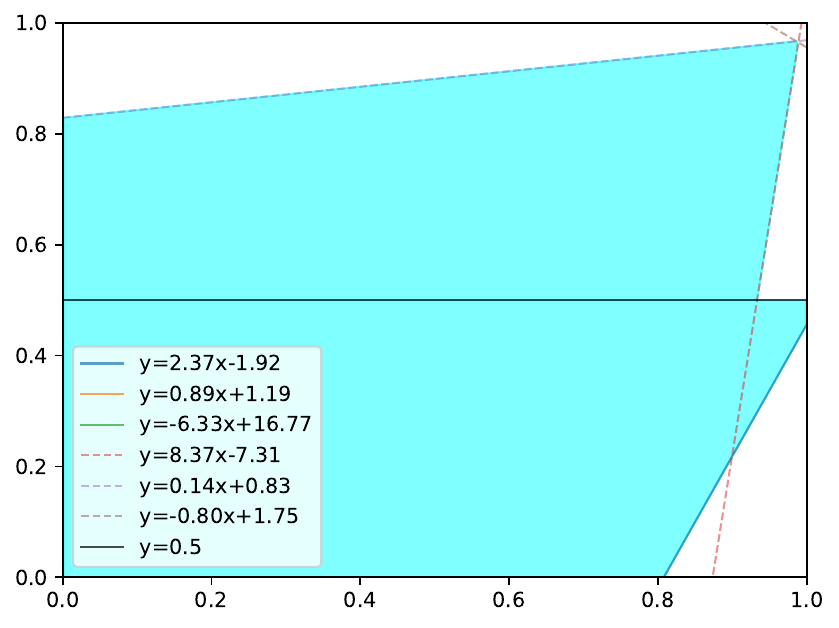}
         \caption{Input split}
         \label{fig:after_branch}
     \end{subfigure}
     \begin{subfigure}[b]{0.23\textwidth}
         \centering
         \includegraphics[width=\textwidth]{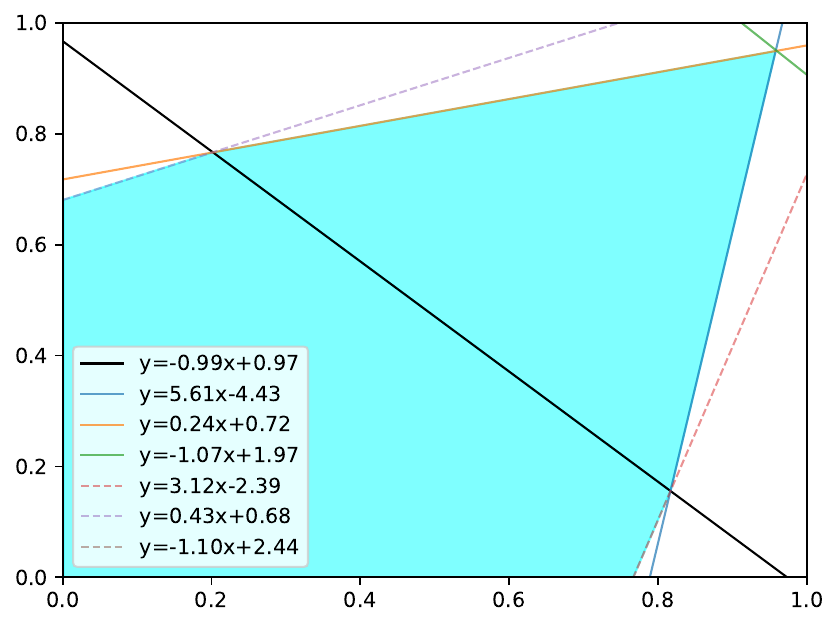}
         \caption{ReLU split}
         \label{fig:relu_split}
     \end{subfigure}
        \caption{Refinement and optimization for preimage approximation.}
        \label{fig:global_branching}
\end{figure}
\begin{example}
\label{eg:branching}
We revisit the vehicle parking problem in Example~\ref{eg:formulation}.
Figure \ref{fig:after_optim} shows the polytope under-approximation computed on the input region $\indomain$ before refinement, where each solid line represents the bounding plane for each output specification ($\outpoint_1 - \outpoint_i \geq 0 $). 
Figure \ref{fig:after_branch} depicts the refined approximation by splitting the input region along the vertical axis, where the solid and dashed lines represent the 
bounding planes for the two resulting subregions. It can be seen that the total volume of the under-approximation has improved significantly.
\end{example}

\textbf{Intermediate ReLU Splitting.}
Refinement through splitting on input features is adequate for low-dimensional input problems such as reinforcement learning agents. However, it may be infeasible to generate sufficiently fine subregions for high-dimensional domains. 
We thus propose an algorithm for ReLU neural networks that uses intermediate ReLU splitting for preimage refinement.
After determining a subregion for refinement, we partition the subregion based upon the pre-activation value of an intermediate unstable neuron $\intervar=0$.
As a result, the original subregion $\indomain_{sub}$ is split into two new subregions
$\indomain^{+}_{\intervar}=\{\inpoint \in \indomain_{sub}~|~\intervar=\preact^{(i)}_j(\inpoint) \ge 0\}$
and $\indomain^{-}_{\intervar}=\{\inpoint \in \indomain_{sub}~|~\intervar=\preact^{(i)}_j(\inpoint) < 0\}$.\footnote{To obtain a polytope under-approximation, we can utilize linear lower/upper bounds on $\preact^{(i)}_j(\inpoint)$ as an approximation to the subregion boundary.}

In this procedure, the order of splitting unstable ReLU neurons can greatly influence the refinement quality and efficiency. 
Existing heuristic methods of ReLU prioritization select ReLU nodes that lead to greater improvement in the final bound (maximum or minimum value) of the neuron network on the input domain~\cite{Bunel20BaB}, i.e. $\min_{\inpoint \in \indomain} \nnunder(x)$.
However, these ReLU prioritization methods are not effective for preimage analysis, because 
our objective is instead to refine the overall preimage approximation. 
We thus propose a heuristic method to prioritize unstable ReLU nodes for preimage refinement.
Specifically, we compute (an estimate of) the volume difference 
between the split subregions
$\lvert \volume(\indomain^{+}_{\intervar})-\volume(\indomain^{-}_{\intervar})\rvert$, using a single forward pass for a set of sampled datapoints from the input domain; note that this is bounded above by the total subregion volume $\volume(\indomain_{sub})$.
We then propose to select the ReLU node that minimizes this difference.
Intuitively, this choice results in balanced subdomains after splitting. 

Another advantage of ReLU splitting is that we can replace the unstable neuron bound $\underline{c} \preact^{(i)}_j(\inpoint) + \underline{d} \leq \postact^{(i)}_j(\inpoint) \leq \overline{c} \preact^{(i)}_j(\inpoint) + \overline{d}$ with the exact linear function $\postact^{(i)}_j(\inpoint) = \preact^{(i)}_j(\inpoint)$ and $\postact^{(i)}_j(\inpoint) = 0$, respectively, as shown in Figure \ref{fig:linear_relaxation} (unstable to stable). This can then tighten the linear bounds for the other neurons, thus tightening the under-approximation on each subdomain.

\begin{example}
\label{eg:reluSplit}
We now apply our algorithm with ReLU splitting to the vehicle parking problem in Example~\ref{eg:formulation}.
Figure \ref{fig:relu_split} shows the refined preimage polytope by adding the splitting plane (black solid line) along the direction of a selected unstable ReLU node. 
Compared with Figure \ref{fig:after_optim}, we can see that the volume of the approximation is improved. 

\end{example}

\remark[Preimage Over-approximation] While Algorithms \ref{alg:main} and \ref{alg:genunderapprox} 
focus on preimage under-approximations, they can be easily configured to generate over-approximations with two key modifications.
Firstly, we generate polytope over-approximations by using LiRPA to propagate a linear \textit{upper} bound $\overline{\spec_i}(\inpoint) = \upperweightsingle_i^T \inpoint + \upperbiassingle_i$ for each output constraint, such that $\spec_i(\inpoint) \geq 0 \implies \overline{\spec_i}(\inpoint) \geq 0 $ for $\inpoint \in \indomain$. Secondly, the 
refinement and
optimization objective is to \emph{minimize} the volume of the over-approximation instead of maximizing the volume as in the case of under-approximation.

\subsection{Overall Algorithm}\label{sec:overall}

Our overall preimage approximation method is summarized in Algorithm \ref{alg:main}. It takes as input a neural network $f$, input region $\indomain$, output region $\outset$, target polytope volume threshold $v$ (a proxy for approximation precision), termination iteration number $\maxiterations$, and a Boolean indicating whether to use input or ReLU splitting, and returns a disjoint polytope union $\polytopeset$ representing an underapproximation to the preimage.

The algorithm initiates and maintains a priority queue of (sub)regions according to Equation~\ref{eqn:priority}.
The \textit{initialization} step (Lines \ref{algline:initial_polytope}-\ref{algline:init_queue}) generates an initial polytope approximation on the whole region using Algorithm \ref{alg:genunderapprox} (Sections \ref{sec:poly_gen}, \ref{sec:local_opt}). Then, the \textit{preimage refinement} loop (Lines \ref{algline:while}-\ref{algline:refine_end}) partitions a subregion in each iteration, with the preimage restricted to the child subregions then being re-approximated (Line \ref{algline:split_node_id}-\ref{algline:subdomain_polytope}). In each iteration, we choose the region to split (Line \ref{algline:refine_start}) and the splitting plane to cut on (Line \ref{algline:input_selection} for input split and Line \ref{algline:relu_selection} for ReLU split), as
detailed in Section \ref{sec:branching}. The preimage under-approximation is then updated by computing the priorities for each subregion (by approximating volumes) (Lines \ref{algline:polytope_volume}-\ref{algline:update_approximation}). The loop terminates and the approximation returned when the target volume threshold $v$ or maximum iteration limit $R$ is reached.

\subsection{Quantitative Verification} \label{sec:verif}

We now show how to use our efficient preimage under-approximation method (Algorithm \ref{alg:main}) to verify a given quantitative property $(\inset, \outset, \proportion)$, where $\outset$ is a polyhedron, $\inset$ a polytope and $p$ the desired proportion value, summarized in Algorithm \ref{alg:verify}. To simplify assume that $\inset$ is a hyperrectangle, so that we can take $\indomain = \inset$ (the case of general polytopes is discussed in the Appendix). 
\begin{algorithm}[tb]
\caption{Quantitative Verification}\label{alg:verify}
\KwIn{Neural network $f$, Property $(\inset, \outset, \proportion)$, Maximum iterations $\maxiterations$}
\KwOut{Verification Result $\in \{\textnormal{True, False, Unknown}\}$}

$\volume(\inset) \gets \textnormal{ExactVolume}(\inset)$\;
$\indomain \gets \textnormal{OuterBox}(\inset)$ \label{algline:outerbox}\tcp*{For general polytopes $\inset$}
$\polytopeset \gets \textnormal{InitialRun}(f, \indomain, \outset)$\;
\While{$\textnormal{Iterations} \leq \maxiterations$} {
    $\polytopeset \gets \textnormal{Refine}(f, \polytopeset, \indomain,\outset)$\;
    \If{$\textnormal{EstimateVolume}(\polytopeset) \geq \proportion \times \volume(\inset)$}{
        \If{$\textnormal{ExactVolume}(\polytopeset) \geq \proportion \times \volume(\inset)$} {
            \KwRet{\textnormal{True}}
        }
    }
    \If{\textnormal{AllReLUSplit}}{
    \KwRet{\textnormal{False}}
    }
}
\KwRet{\textnormal{Unknown}}
\end{algorithm}
We utilize Algorithm \ref{alg:main} by setting the volume threshold to $\proportion \times \volume(\inset)$, such that we have $\frac{\widehat{\volume}(\polytopeset)}{\volume(\inset)} \geq \proportion$ if the algorithm terminates before reaching the maximum number of iterations. 
However, 
the Monte Carlo estimates of volume 
cannot provide a sound guarantee that $\frac{\volume(\polytopeset)}{\volume(\inset)} \geq \proportion$. To resolve this problem, we propose to run exact volume computation \cite{Barber96Qhull} only when the Monte Carlo estimate reaches the threshold. If the exact volume $\volume(\polytopeset) \geq \proportion \times \volume(\inset)$, then the property is verified. Otherwise, we continue running the preimage refinement. 

In Algorithm \ref{alg:verify}, \texttt{InitialRun} generates an initial approximation to the preimage as in Lines \ref{algline:initial_polytope}-\ref{algline:init_queue} of Algorithm 1, and \texttt{Refine} performs one iteration of approximation refinement (Lines \ref{algline:refine_start}-\ref{algline:refine_end}). Termination occurs when we have verified or falsified the quantitative property, or when the maximum number of iterations has been exceeded.

\begin{restatable}{proposition}{propSound}
    Algorithm \ref{alg:verify} is sound for quantitative verification with input splitting.
\end{restatable}

\begin{restatable}{proposition}{propComplete}
    Algorithm \ref{alg:verify} is sound and complete for quantitative verification on piecewise linear neural networks with ReLU splitting.
\end{restatable}

\section{Experiments}
\label{sec:evaluation}
We have implemented our approach as a prototype tool \footnote{The source code
is at \texttt{https://github.com/Zhang-Xiyue/PreimageApproxForNNs.}} for preimage approximation for 
polyhedron-type 
output sets/specifications. 
In this section, 
we perform experimental evaluation of the proposed approach on a set of benchmark tasks and demonstrate its 
effectiveness in approximation generation 
and its application to
quantitative analysis of neural networks.

\subsection{Benchmark and Evaluation Metric}\label{sec:eval_metric}
We evaluate our preimage analysis approach on a benchmark of reinforcement learning and image classification tasks. 
Besides the vehicle parking task \cite{Ayala11vehicle} shown in the running example,
we use the following (trained) benchmarks:
(1) aircraft collision avoidance system (VCAS) \cite{Julian19nncontrol} with 9 feed-forward neural networks (FNNs); (2) neural network controllers from VNN-COMP 2022 \cite{vnn22} for three reinforcement learning tasks (Cartpole, Lunarlander, and Dubinsrejoin)~\cite{Brockman16}; and (3) the neural network from VNN-COMP 2022 for MNIST classification.
Details of the models and additional experiments can be found in the Appendix.

\textbf{Evaluation Metric}
To evaluate the quality of the preimage approximation, 
we define the \textit{coverage ratio} to be the ratio of volume covered to the volume of the exact preimage,
i.e., 
    $\cov(\polytopeset, \preimage_{\indomain}(\outset)) := \frac{\volume(\polytopeset)}{\volume(\preimage_{\indomain}(\outset))}$.
Note that this is a normalized measure for assessing the quality of the approximation, 
as shown in Algorithm~\ref{alg:verify} when comparing with target coverage proportion $\proportion$ for termination of the refinement loop, and not as a measure for formal verification guarantees.
In practice, we estimate $\volume(\preimage_{\indomain}(\outset))$ 
as $\widehat{\volume}(\preimage_{\indomain}(\outset)) = \text{vol}(\indomain)\times \frac{1}{\numsamples} \sum_{i=1}^{\numsamples} \mathds{1}_{\nn(\inpoint_i) \in \outset}$, where $\inpoint_1, ... \inpoint_{\numsamples}$ are samples from $\indomain$.
In Algorithm \ref{alg:main}, 
the target volume  (stopping criterion) is set as $\threshold = \targetcov \times \widehat{\volume}(\preimage_{\indomain}(\outset)$, where $\targetcov$ is the \textit{target coverage ratio}.

\subsection{Evaluation}\label{sec:eval}
\subsubsection{Effectiveness in Preimage Approximation with Input Split}

We apply Algorithm~\ref{alg:main}
with input splitting to 
the input bounding problem for
low-dimensional reinforcement learning tasks to evaluate its effectiveness.
For comparison, we also run the exact preimage (Exact) \cite{Matoba20Exact} and preimage over-approximation (Invprop) \cite{ProveBound,invprop} methods.

\begin{table}[tb]
\caption{Performance comparison in preimage generation.}\label{tab:compare_baseline}
\centering
\begin{tabular}{c|c c|c c|c c c} 
\toprule
\multirow{2}{*}{\textbf{Models}}  
        & \multicolumn{2}{c|}{\textbf{Exact}}
        & \multicolumn{2}{c|}{\textbf{Invprop}}
         & \multicolumn{3}{c}{\textbf{Our}} \\
\cmidrule{2-8}
         & \textbf{\#Poly} & \textbf{Time} 
        & \textbf{Time} & \textbf{Cov(\%)} 
        & \textbf{\#Poly} & \textbf{Time} & \textbf{Cov(\%)} \\ 
\midrule
Vehicle (FNN $1\times 20$)& 10 & 3110.979
 & 2.642 & 92.1
& 4  & 1.175 & 95.7 \\ 
VCAS (FNN $1\times 21$)
& 131 & 6363.272 
 & - & -
& 12 & 11.281 & 91.0\\
\bottomrule
\end{tabular}
\end{table}
\textit{Vehicle Parking \& VCAS.}
Table~\ref{tab:compare_baseline} presents experimental results on the vehicle parking and VCAS tasks.
In the table, we show the number of polytopes (\#Poly) in the preimage, computation time (Time(s)), and
the approximate coverage ratio (Cov(\%)) when the preimage approximation algorithm terminates with target coverage 90\%.
Compared with the exact method, 
our approach yields \textit{orders-of-magnitude} improvement in efficiency. It can also characterize the preimage with much fewer (and also disjoint) polytopes (average reduction of 91.1\% for VCAS).

The Invprop method \cite{ProveBound} cannot be directly applied as it computes preimage over-approximations.
We adapt it to produce an under-approximation by computing over-approximations for the complement of each output constraint; the resulting approximation is then the complement of a
union of polytopes,
rather than a DUP.
On the 2D vehicle parking task, we find that the results (see Table~\ref{tab:compare_baseline}) are comparable with ours in time and approximation coverage.
Their implementation currently only supports two-dimensional input tasks~\cite{invprop}.  
While their algorithm, which employs input splitting, can in theory be extended to higher-dimensional tasks, a significant unaddressed technical challenge is in how to choose the input splits effectively in high dimensions. This is confounded by the fact that, to generate an under-approximation, we need separate runs of their algorithm for each output constraint. In contrast, 
our method naturally incorporates a principled splitting and refinement strategy, and can also effectively employ ReLU splitting for further  scalability, as we will show below.
Our method can also be configured to generate over-approximations (Section~\ref{sec:branching}, Remark 1).

\begin{table}[t]
\caption{Performance of preimage approximation for reinforcement learning tasks.}\label{tab:rl-tasks}
\centering
\begin{tabular}{cc|c|c|c|c} 
\toprule
\textbf{Task} & \textbf{Property} & \textbf{Config} & \textbf{\#Poly} & \textbf{Cov(\%)} & \textbf{Time} \\ 
\midrule
\multirow{3}{*}{\makecell{Cartpole\\ (FNN $2 \times 64$)}} & \multirow{3}{*}{$\{y\in \mathbb{R}^2|~ y_1 \ge y_2\}$} & $\dot{\theta} \in [-2,-1]$ & 8 & 82.0 & 8.933\\ 
& & $\dot{\theta} \in [-2,-0.5]$ & 17 & 75.5 & 14.527\\ 
& & $\dot{\theta} \in [-2,0]$ &32 & 76.5 & 27.344\\ 
\midrule
\multirow{3}{*}{\makecell{Lunarlander \\ (FNN $2 \times 64$)}}
 & 
\multirow{3}{*}{$\{y\in \mathbb{R}^4| \wedge_{i \in \{1,3,4\}}  y_2 \ge y_i\}$} & $\dot{v} \in [-0.5, 0]$ & 38 & 75.5 & 34.311\\ 
& & $\dot{v} \in [-1, 0]$ & 71 & 75.1 & 63.333\\ 
& & $\dot{v} \in [-2, 0]$ & 159 & 75.0 & 134.929 \\
\midrule
\multirow{3}{*}{\makecell{Dubinsrejoin\\ (FNN $2 \times 256$)}}
& 
\multirow{3}{*}{\makecell{$\{y\in \mathbb{R}^8|\wedge_{i \in [2,4]} ~ y_1 \ge y_i $ \\
$\quad \quad \quad \,\bigwedge \wedge_{i \in [6,8]} ~y_5 \ge y_i\}$}} & $x_v \in [-0.1,0.1]$ & 26 & 75.8 & 34.558\\ 
& & $x_v \in [-0.2,0.2]$ & 61 & 75.4 & 78.437\\
& & $x_v \in [-0.3,0.3]$ & 1002 & 57.6 & 1267.272\\
\bottomrule
\end{tabular}
\end{table}

\textit{Neural Network Controllers.}
In this experiment, we consider preimage under-approximation for neural network controllers in reinforcement learning tasks. 
Note that \cite{Matoba20Exact} (Exact) is unable to deal with neural networks of these sizes and \cite{ProveBound,invprop} (Invprop) does not support these higher-dimensional input domains.
Table~\ref{tab:rl-tasks} summarizes the experimental results.  
We evaluate 
Algorithm~\ref{alg:main} with input split
on a range of tasks/properties and configurations of the input region (e.g., angular velocity $\dot{\theta}$ for Cartpole). 
Empirically, for the same coverage ratio, our method requires a number of polytopes and time roughly linear in the input region size, with the exception of Dubinsrejoin, where the larger number of output constraints and larger network size contribute to greater relaxation error.

\noindent\textbf{MNIST Preimage Approximation with ReLU Split}
Next, we evaluate the scalability of Algorithm \ref{alg:main} with ReLU splitting
by applying it to MNIST image classifiers. To our knowledge, this is the first time preimage computation has been attempted for this challenging, high-dimensional task.

\begin{table}[t]
    \caption{Refinement with ReLU split for MNIST (FNN $6 \times 100$)}.
    \label{tab:image_task}
    \centering
    \begin{tabular}{c|c|c|c||c|c|c|c}
        \toprule
         $L_{\infty}$ \textbf{attack}
        &\textbf{\#Poly}
         & \textbf{Cov(\%)}
         & \textbf{Time} 
        &  \textbf{Patch attack}
        &\textbf{\#Poly}
         & \textbf{Cov(\%)}
         & \textbf{Time}\\
        \midrule
          0.05
        & 2
        & 100.0
        & 3.107
        & $3 \times 3$(center)
        & 1
        & 100.0
        & 2.611
        \\
         0.07
        & 247
        & 75.2
         & 121.661
           & $4 \times 4$(center)
        & 678
        & 38.2
         & 455.988
         \\
         0.08
        & 522
        & 75.1
        & 305.867
                & $6 \times 6$(corner)
        & 2
        & 100.0
        & 9.065
        \\
         0.09
         & 733
         & 16.5
         & 507.116
          & $7 \times 7$(corner)
         & 7
         & 84.2
         & 10.128
         \\
        \bottomrule
    \end{tabular}
\end{table}

Table \ref{tab:image_task} summarizes the evaluation results for
two types of image attacks: $l_{\infty}$ and patch attack.
For $L_{\infty}$ attacks, bounded perturbation noise is applied to all image pixels.
The patch attack applies only to a smaller patch area but allows arbitrary perturbations covering the whole valid range $[0,1]$. The task is then to produce a DUP under-approximation of the perturbation region that is guaranteed to be classified correctly.
For $L_{\infty}$ attack, our approach generates a preimage approximation that achieves the targeted coverage of $75\%$ for noise up to 0.08. Notice that, from e.g. $0.05$ to $0.07$, the volume of the input region increases by tens of orders of magnitude due to the high dimensionality. The fact that the number of polytopes and computation time remains manageable is  due to the effectiveness of ReLU splitting.
Interestingly, for the patch attack, we observe that the number of polytopes required increases sharply when increasing the patch size at the center of the image, while this is not the case for patches in the corners of the image. We hypothesize this is due to the greater influence of central pixels on the neural network output, and correspondingly a greater number of unstable neurons over the input perturbation space.

\begin{table}[t]
    \caption{Comparison with a robustness verifier.}
    \label{tab:compare_verifier}
    \centering
    \begin{tabular}{c|cc|ccc}
        \toprule
        \multirow{2}{*}{\textbf{Task}}
        &\multicolumn{2}{c|}{\textbf{$\alpha,\beta$-CROWN}}
         & \multicolumn{3}{c}{\textbf{Our}} \\
        \cmidrule{2-6}
        & Result
        & Time
        & Cov(\%)
        & \#Poly
        & Time\\
        \midrule
        Cartpole ($\dot{\theta} \in [-1.642,-1.546]$) 
         & yes
         & 3.349
        & 100.0
        & 1
         & 1.137
        \\        
         \midrule
        Cartpole ($\dot{\theta} \in [-1.642,0]$)
         &no
         & 6.927
        & 94.9
        & 2
         & 3.632
        \\
        \midrule
        MNIST ($L_{\infty}$ 0.026)
        & yes
         & 3.415
        & 100.0
        & 1
         & 2.649
         \\
         \midrule
        MNIST ($L_{\infty}$ 0.04)
        & unknown
         & 267.139
        & 100.0
        & 2
         & 3.019
         \\
        \bottomrule
    \end{tabular}
\end{table}
\noindent\textbf{Comparison with Robustness Verifiers}
We now illustrate empirically the utility of preimage computation in robustness analysis compared to robustness verifiers. Table \ref{tab:compare_verifier} shows comparison results with $\alpha,\beta$-CROWN,  winner of the VNN competition~\cite{vnn22}.
We set the tasks according to the problem instances from VNN-COMP 2022 for local robustness verification (localized perturbation regions).
For Cartpole, $\alpha,\beta$-CROWN can provide a verification guarantee (yes/no or safe/unsafe) for both of the problem instances. However, in the case where the robustness property does not hold, our 
method 
explicitly generates a preimage approximation in the form of a disjoint polytope union (where correct classification is guaranteed), and
covers $94.9\%$ of the exact preimage. 
For MNIST, while the smaller perturbation region is successfully verified, $\alpha,\beta$-CROWN with tightened intermediate bounds by MIP solvers returns unknown with a timeout of 300s for the larger region. 
In comparison,
our algorithm provides a concrete union of polytopes where the input is guaranteed to be correctly classified, which we find covers 100$\%$ of the input region (up to sampling error). Note also (Table \ref{tab:image_task}) that our algorithm can produce non-trivial under-approximations for input regions far larger than $\alpha, \beta$-CROWN can verify.

\noindent\textbf{Quantitative Verification}
We now demonstrate the application of our preimage generation framework to quantitative verification of the property $(\inset, \outset, \proportion)$; that is, to check whether $\nn(\inpoint) \in \outset$ for at least proportion $\proportion$ of input values $\inpoint \in \inset$. This leverages the disjointness of our approximation, such that we can exactly compute the volume covered by exactly computing the volume of each polytope.

\textit{Vehicle Parking.}
We 
consider the quantitative property with input set $\inset= \{x \in \mathbb{R}^{2}~|~x \in [0,1]^2\}$, output set  $\outset=\{\outpoint \in \mathbb{R}^{4}|\bigwedge_{i = 2}^{4} y_1 - \ y_i \geq 0\}$,
and quantitative proportion $p=0.95$.
We use Algorithm \ref{alg:verify} to verify this property, with iteration limit
 1000.
The computed under-approximation is a union of two polytopes, which takes 0.942s to reach the target coverage.
We then compute the exact volume ratio of the under-approximation against the input region.
The final quantitative proportion reached by our under-approximation is 95.2\%, verifying the quantitative property.

\textit{Aircraft Collision Avoidance.}
In this example, we consider the VCAS system
and a scenario where the two aircraft have negative relative altitude from intruder to ownship ($h \in [-8000, 0]$), the ownship aircraft has a positive climbing rate $\dot{h_A} \in[0,100]$ and the intruder has a stable negative climbing rate $\dot{h_B}=-30$, and time to the loss of horizontal separation is $t \in [0,40]$, which defines the input region $\inset$.
For this scenario, the correct advisory is ``Clear Of Conflict'' (COC).
We apply Algorithm \ref{alg:verify} to verify the quantitative property where $\outset=\{\outpoint \in \mathbb{R}^{9} | \bigwedge_{i = 2}^{9} y_1 - y_i \geq 0\}$ and the 
proportion $p=0.9$, with an iteration limit of 1000. 
The under-approximation computed is a union of 6 polytopes, which takes 5.620s to reach the target coverage.
The exact quantitative proportion reached by the generated under-approximation is 90.8\%, which verifies the quantitative property.

\section{Related Work}
Our paper is related to a series of works on 
robustness verification of neural networks.
To address the scalability issues with
\textit{complete} verifiers \cite{huang2017safety,katz2017reluplex,tjeng2019evaluating} based on constraint solving,
convex relaxation \cite{salman2019convex} has been used for developing highly efficient \textit{incomplete} verification methods \cite{zhang2018crown,wong2018provable,singh2019deeppoly,xu2020automated}.
Later works employed the branch-and-bound (BaB) framework \cite{bunel2018unified,Bunel20BaB} to achieve completeness, using incomplete methods for the bounding procedure \cite{xu2021fast,wang2021beta,Ferrari22MultiBaB}. In this work, we adapt convex relaxation for efficient preimage approximation. Further, our divide-and-conquer procedure is analogous to BaB, but focuses on maximizing covered volume rather than maximizing a function value.
There are also works
that have sought to define a weaker notion of local robustness known as \textit{statistical robustness} \cite{webb2018statistical,mangal2019probabilistic}, which requires that a proportion of points under some perturbation distribution around an input point are classified in the same way. 
Verification of statistical robustness is typically achieved by sampling and statistical guarantees \cite{webb2018statistical,baluta2021quantitative,tit2021corruption,yang2021quantitative}.
In this paper, we apply our symbolic approximation approach to quantitative analysis of neural networks, while providing \textit{exact quantitative} rather than \textit{statistical} guarantees \cite{wicker20probsafety}.

Another line of related works considers deriving exact or approximate abstractions of neural networks,
 which are applied for explanation~\cite{sotoudeh2021syrenn}, verification~\cite{elboher2020abstraction,pulina2010abstraction}, reachability analysis~\cite{prabhakar2019abstraction}, and preimage approximation~\cite{Dathathri19Inverse,ProveBound}.
\cite{Dathathri19Inverse} leverages symbolic interpolants \cite{Albarghouthi13interpolants} for preimage approximations, facing exponential complexity in the number of hidden neurons. 
Concurrently,
\cite{ProveBound} employs
Lagrangian dual optimization for preimage over-approximations.
Our anytime algorithm, which combines convex relaxation with principled splitting strategies for refinement, is applicable for both under- and over-approximations. 
Their work may benefit from our splitting strategies to scale to higher dimensions.

\section{Conclusion}
\label{sec:conclusion}
We present an efficient and flexible algorithm for preimage under-approximation of neural networks. 
Our \emph{anytime} method derives from the observation that linear relaxation can be used to efficiently produce under-approximations, in conjunction with custom-designed strategies for iteratively decomposing the problem to rapidly improve the approximation quality. Unlike previous 
approaches, it
is designed for, and scales to, both low and high-dimensional problems. Experimental evaluation on a range of benchmark tasks shows significant advantage in runtime efficiency and scalability, and the utility of our method for important applications in quantitative verification and robustness analysis.

\subsubsection{Acknowledgments}
This project received funding from the ERC under the European Union’s Horizon 2020 research and innovation programme (FUN2MODEL, grant agreement No.~834115) and ELSA: European Lighthouse on Secure and Safe AI project (grant agreement
No. 101070617 under UK guarantee). This
work was done in part while Benjie Wang was visiting the Simons Institute for the Theory of Computing.

\bibliographystyle{splncs04}
\bibliography{references}

\newpage
\appendix
\begin{center}
{\Large{\bf{Appendix}}}
\end{center}

\section{Experiment Setup}
We implement our preimage approximation approach and perform evaluation in Python. 
We use CROWN~\cite{crown} to perform linear relaxation on the nonlinear activation functions and bound propagation. 
We use the NNet package~\cite{nnet} and ONNX library of PyTorch \cite{pytorch} 
for neural network format conversion. 
For polyhedral operation and computation, we use the Parma Polyhedra Library (PPL) \cite{Bagnara2008PPL} to build the polyhedron object and use Qhull \cite{qhull} to compute the polytope volume.
All experiments are conducted on a cluster with Intel Xeon Gold 6252 2.1GHz CPU, and NVIDIA 2080Ti GPU.

\section{Experiment Evaluation}\label{sec:complete_eval}
In this section, we present the detailed configuration of neural networks and the complete experimental results on the benchmark tasks.
\subsection{Vehicle Parking.} 
For the vehicle parking task,
we consider computing the preimage approximation with input region corresponding to the entire input space
$\indomain = \{\inpoint \in \mathbb{R}^2 | \inpoint \in [0,2]^2 \}$, and output sets $\outset_{k}$, which correspond to the neural network outputting label $k$:

\begin{equation}
\outset_k = \{\outpoint \in \mathbb{R}^4~| \bigwedge_{i \in \{1, 2, 3, 4\}\setminus k} \outpoint_k - \outpoint_i \geq 0\},  \quad k \in \{1, 2, 3, 4\}
\end{equation}

We set the target coverage ratio as 90\% and the iteration limit of our algorithm as 1000.
The experimental comparison of the \textit{exact} preimage generation method and our approach are shown in Table~\ref{tab:parking}. 
From the evaluation results, we can see that our approach can effectively and efficiently generate preimage under-approximations with an average coverage ratio of 93.7\% and runtime cost of 0.722s.
The size of 
the polytope union is less than 4 for all output properties.
Compared with the polytope union generated  
by the exact solution,
our approach realizes an average reduction by 73.1\%.
As for the time cost, the exact method requires about 50 minutes for each output specification. In comparison, our approach demonstrates significant improvement
in runtime efficiency.
\begin{table}[tb]
\caption{Performance of preimage generation for vehicle parking.}\label{tab:parking}
\centering
\begin{tabular}{c|cc|ccc} 
 \toprule
        \multirow{2}{*}{\textbf{Property}} 
        & \multicolumn{2}{c|}{\textbf{Exact}}
         & \multicolumn{3}{c}{\textbf{Our}} \\
\cmidrule{2-6}
        &\textbf{\#Polytope} & \textbf{Time(s)}
         &\textbf{\#Polytope} & \textbf{Time(s)}
        & \textbf{PolyCov(\%)}   \\ 
      
\midrule
 $\{y\in \mathbb{R}^4| \wedge_{i \in \{2,3,4\}} y_1 \ge y_i\}$ 
& 10 & 3110.979
& 4  & 1.175 & 95.7\\ 
 $\{y\in \mathbb{R}^4| \wedge_{i \in \{1,3,4\}} y_2 \ge y_i\}$ 
& 20 & 3196.561
& 4  & 0.661 & 91.3\\ 
 $\{y\in \mathbb{R}^4| \wedge_{i \in \{1,2,4\}} y_3 \ge y_i\}$ 
& 7 & 3184.298
& 3  & 0.515 & 96.0 \\ 
 $\{y\in \mathbb{R}^4| \wedge_{i \in \{1,2,3\}} y_4 \ge y_i\}$ 
& 15 & 3206.998
& 3  & 0.535 & 91.9\\ 
\midrule
\textbf{Average}
& 13 & 3174.709 & 4 & 0.722 & 93.7\\
\bottomrule
\end{tabular}
\end{table}

\subsection{Aircraft Collision Avoidance}
\begin{table}[tb]
\caption{Performance of preimage generation for VCAS.}\label{tab:vcas}
\centering
\begin{tabular}{c|c c|c c c} 
\toprule
\multirow{2}{*}{\textbf{Models}}  
        & \multicolumn{2}{c|}{\textbf{Exact}}
         & \multicolumn{3}{c}{\textbf{Our}} \\
\cmidrule{2-6}
         & \textbf{\#Polytope} & \textbf{Time(s)} 
        & \textbf{\#Polytope} & \textbf{Time(s)} & \textbf{PolyCov(\%)} \\ 
\midrule
VCAS 1 & 49 & 6348.937 & 13  & 13.619 & 90.5 \\ 
VCAS 2  & 120 & 6325.712 & 11  & 10.431 & 90.4\\ 
VCAS 3 & 253 & 6327.981 & 18  & 18.05 & 90.0\\ 
VCAS 4 & 165 & 6435.46 & 11  & 10.384 & 90.4\\ 
VCAS 5 & 122 & 6366.877 & 11  & 10.855 & 91.1\\ 
VCAS 6 & 162 & 6382.198 & 10  & 9.202 & 92.0\\ 
VCAS 7  & 62 & 6374.165 & 6  & 4.526 & 92.8\\ 
VCAS 8  & 120 & 6341.173 & 14  & 13.677 & 91.3\\ 
VCAS 9 & 125 & 6366.941 & 11  & 10.782 & 90.3\\ 
\midrule
\textbf{Average}
& 131 & 6363.272 & 12 & 11.281 & 91.0\\
\bottomrule
\end{tabular}
\end{table}

The aircraft collision avoidance (VCAS) system~ \cite{Julian19nncontrol}
is used
to provide advisory for collision avoidance between the ownship aircraft and the intruder.
VCAS uses four input features 
$(h, \dot{h_A},\dot{h_B}, t)$ 
representing the relative altitude of the aircrafts,
vertical climbing rates of the ownship and intruder aircrafts, respectively, and time to the loss of horizontal separation. 
VCAS is implemented by nine feed-forward neural networks built  with a hidden layer of 21 neurons and an output layer of nine neurons corresponding to different advisories.

In our experiment, we use the 
following
input 
region
for the ownship and intruder aircraft as in \cite{Matoba20Exact}:  
$h \in [-8000, 8000]$, $\dot{h_A} \in [-100, 100]$, $\dot{h_B}=30$, and $t\in [0, 40]$.
We consider the output property $\outset = \{y\in \mathbb{R}^9~|\wedge_{i \in [2,9]}~y_1 \ge y_i\}$ and 
generate the preimage approximation for
the VCAS neural networks.

The experimental results are summarized in Table~\ref{tab:vcas}.
We compare our method with exact preimage generation, showing the number of polytopes (\#Polytope) in the under-approximation and exact preimage, respectively, and time in seconds (Time(s)).
Column ``PolyCov(\%)'' shows the approximate coverage ratio 
of our approach when the algorithm terminates.
For all VCAS networks, our approach effectively generates the preimage under-approximations with the polytope number varying from 6 to 18.
Compared with 
the exact method,
our approach realizes an average reduction of 91.1\% (131 vs 12).
Further, the computation time of our approach for all neural networks is less than 20s, and
$564\times$ faster than the exact method on average.

\subsection{Neural Network Controllers \& MNIST Classification}
In this subsection, we summarize the detailed configuration for neural network controllers in reinforcement learning tasks and MNIST classification tasks.
\subsubsection{Cartpole.} The cartpole control problem considers balancing a pole atop a cart by controlling the movement of the cart.
The neural network controller has two hidden layers with 64 neurons, and uses four input variables 
representing the position and velocity of the cart, the angle and angular velocity of the pole. The controller outputs are pushing the cart left or right.
In the experiments, we set the following input region for the Cartpole task: (1) cart position $[0,1]$, (2) cart velocity $[0,2]$, (3) angle of the pole $[-0.2,0]$, and (4) angular velocity of the pole $[-2,0]$ (with varied feature length in the evaluation). We consider the output property for the action \textit{pushing left}.

 \subsubsection{Lunarlander.} The Lunarlander problem considers the task of correct landing of  a moon lander on a landing pad. 
 The neural network for Lunarlander has two hidden layers with 64 neurons, and eight input features addressing the lander's coordinate, orientation, velocities, and ground contact indicators. The outputs represent four actions.
 For the Lunarlander task, we set the input region as: (1) horizontal and vertical position $[-1, 0]$ $\times$ $[0,1]$, (2) horizontal and vertical velocity $[0,2]$ $\times$ $[-2,0]$ (with varied feature length for evaluation), (3) angle and angular velocity $[-1,0] \times [-0.1,0.1]$, (4) left and right leg contact $[0.5,1]^2$. We consider the output specification for the action ``fire main engine'', i.e., $\{y \in \mathbb{R}^4~| \wedge_{i \in \{1,3,4\}} y_2 \ge y_i\}$.
\subsubsection{Dubinsrejoin.} The Dubinsrejoin problem considers guiding a wingman craft to a certain radius around a lead aircraft. 
The neural network controller has two hidden layers with 256 neurons.
The input space of the neural network controller is eight dimensional, with the input variables capturing the position, heading, velocity of the lead and wingman crafts, respectively.
The outputs are also eight dimensional representing controlling actions of the wingman. 
Note that the eight network outputs are processed further as tuples of actuators (rudder, throttle) for controlling the wingman where each actuator has 4 options.
The control action tuple is decided by taking the action with the maximum output value among the first four network outputs (the first actuator options) and the action with the maximum value among the second four network outputs (the second actuator options).
In the experiments, we set the following input region: 
(1) horizontal and vertical position $[-0.2, 0]\times [0, 0.5]$, (2) heading and velocity $[-1,0]\times [0, 0.2]$ for the lead aircraft,
and (3) horizontal and vertical position $[0.4, 0.6]\times [-0.3,0.3]$ (with varied feature length for evaluation), (4) heading and velocity $[0.2,0.5] \times [-1.5, 0]$ for the wingman aircraft.
We consider the output property that  both actuators (rudder, throttle) take the first option, i.e., $\{y\in \mathbb{R}^8~|\wedge_{i \in \{2, 3, 4\}} ~ y_1 \ge y_i \bigwedge \wedge_{i \in \{6, 7, 8\}}~y_5 \ge y_i\}$. 

\subsubsection{MNIST Classification.}
We use the trained neural network from VNN-COMP 2022 \cite{vnn22} for digit image classification. 
The neural network has six layers with a hidden neuron size of 100 for each hidden layer.
We consider
two types of image attacks: $l_{\infty}$ and patch attack.
For $L_{\infty}$ attack, a perturbation is applied to all pixels of the image.
For the patch attack, it applies arbitrary perturbations to the patch area, i.e., the perturbation noise covers the whole valid range $[0,1]$, for which we set the patch area at the center and (upper-left) corner of the image with different sizes.

\section{Effect of Methodological Components and Parameter Configurations}
\label{sec:ablation}

In this section, we perform an ablation study on two methodological components, subregion prioritization and input feature selection, 
and investigate the impact of two parameters in our 
approach, the sample suite size $N$ for approximation quality evaluation and the targeted coverage ratio for the tradeoff between approximation quality and efficiency.

\subsection{Ablation Study}
In this subsection, we examine the effectiveness of our subregion and input feature selection methods on the approximation quality.

\subsubsection{Subregion Selection.}
To evaluate the impact of our region selection method, we conduct comparison experiments with random region selection.
The random strategy differs from our approach in selecting the next region to split randomly without prioritization. 
We perform comparison experiments on the benchmark tasks.
Table~\ref{tab:ablation} summarizes the evaluation results of random strategy (Column ``Rand'') and our method (Column ``Our'')
for under-approximation refinement.
We set the same target coverage ratio and iteration limit for both strategies.
Note that for \textit{Dubinsrejoin}, random selection method hits the iteration limit and fails to reach the target coverage ratio.
The results confirm the effectiveness of our region selection method in that fewer iterations of the approximation refinement are required to reach the target coverage ratio, leading to
(1) a smaller number of polytopes (\#Polytope), reduced by 78.7\% on average,
and
(2)
a 75.5\% average runtime reduction. 

\begin{table}[t]
    \caption{Ablation study on methodological components.}
    \label{tab:ablation}
    \centering
    \begin{tabular}{cccc|ccc|ccc}
        \toprule
        \multirow{2}{*}{\textbf{Model}}
        &\multicolumn{3}{c}{\textbf{\#Polytope}}
         & \multicolumn{3}{c}{\textbf{PolyCov(\%)}}
         & \multicolumn{3}{c}{\textbf{Time(s)}} \\
        \cmidrule{2-10}
        & Rand
        & Heuristic
         & Our
         & Rand
          & Heuristic
         & Our
         & Rand
          & Heuristic
         & Our \\
        \midrule
        Vehicle Parking
         & 41
         & 4
         & 4
        & 90.5
        & 95.2
         & 91.3
         & 8.699
         & 0.967
         & 0.661
        \\
        \midrule
        VCAS
         & 56
         & 72
         & 13
        & 92.3
        & 90.0
         & 90.5
         & 29.15
         & 30.829
         & 13.619
        \\
        \midrule
        Cartpole
         & 151
         & 34
         & 32
        & 75.3
        & 75.3
         & 75.4
         & 84.639
         & 12.313
         & 17.156
        \\
        \midrule
        Lunarlander
         & 481
         & 238
         & 130
        & 75.1
        & 75.1
         & 75.0
         & 505.747
         & 119.744
         & 154.882
        \\
        \midrule
        Dubinsrejoin
         & 502
         & 105
         & 83
        & 61.8
        & 75.3
         & 75.1
         & 587.908
         & 65.869
         & 111.968
        \\
        \midrule
        \textbf{Average}
        & 246
        & 91
        & 52
        & 79.0
        & 82.2
         & 81.5
         & 243.229
         & 45.944
         & 59.657
        \\
        \bottomrule
    \end{tabular}
\end{table}

\subsubsection{Splitting Feature Selection.}
We compare our greedy
splitting method
with the heuristic splitting method 
which chooses to split the selected subregion along the input index with the largest value range. 
We present the comparison results of our method with the heuristic method (Column ``Heuristic'') in Table~\ref{tab:ablation}.
Our method requires splitting on all input features, computing the preimage approximations for all splits, and then choosing the dimension that refines the under-approximation the most, and we find that, even with parallelization of the computation over input features, our approach 
leads to
larger runtime overhead per-iteration
compared with the heuristic method.
Despite this, we find that our strategy actually requires \textit{fewer} refinement iterations to reach the target coverage, leading to a smaller number of polytopes (42.2\%  reduction on average) for the same approximation quality, demonstrating the per-iteration improvement in volume of the greedy vs heuristic strategy.

\subsection{Effect of Parameter Configurations}

\subsubsection{Sample Suite Size.}
In our approach, the sample suite size $N$ 
is used to estimate the volume of the polytope approximations and restricted preimage, which is then used to select the subregion to split and the feature to split on, as well as in the $\nnslope$-optimization procedure.
Figure~\ref{fig:sample_size} shows the evaluation results of the impact of sample suite size on preimage approximation for Cartpole. 
As demonstrated in Figure~\ref{fig:sample_size}, the number of polytopes (\#Polytope) that comprises the under-approximation 
decreases as the sample suite size increases until it reaches a fairly stable number (with minor fluctuation) with a sample size of 10000, with diminishing returns beyond that point.
This reflects that a smaller sample suite size leads to greater estimation error of the volumes
which negatively impacts both the refinement procedure and the $\nnslope$-optimization, 
resulting in more iterations needed to reach the target coverage.
On the other hand, a larger sample size provides more accurate estimation of the approximation quality, but requires larger runtime overhead, as reflected in the increase of runtime cost when the sample suite size continues to increase. 
In our experiments, we use a sample suite size of 10000 to balance between the estimation accuracy and runtime efficiency.

\begin{figure}[t]
\centering
\includegraphics[width=0.4\textwidth]{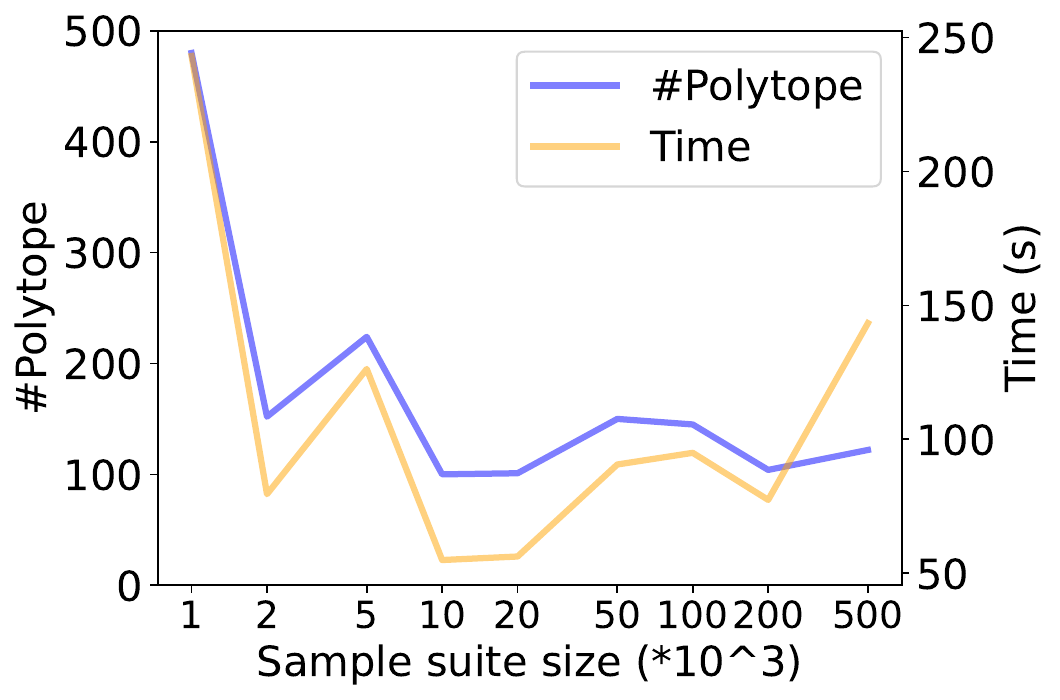}
\caption{Effect of sample suite size on preimage approximation.} \label{fig:sample_size}
\end{figure}

\subsubsection{Target Coverage.}
The target coverage controls the preimage approximation quality of our approach. 
Figure~\ref{fig:target_cov} shows the evaluation results of the impact of target coverage ratio on the polytope number
and runtime cost.
Overall, computing preimage approximation with larger target coverage ratio (better approximation quality) requires more refinement iterations on the input region, thus leading to more polytopes in the DUP approximation and larger runtime cost.
Our approach tries to minimize the polytope number 
to reach the target coverage ratio by prioritizing refinement on the input subregion with the lowest polytope coverage and selecting the splitting feature that improves the approximation quality the most.

\begin{figure}[t]
    \centering
    \begin{subfigure}{0.47\textwidth}
        \centering
        \includegraphics[width=\textwidth]{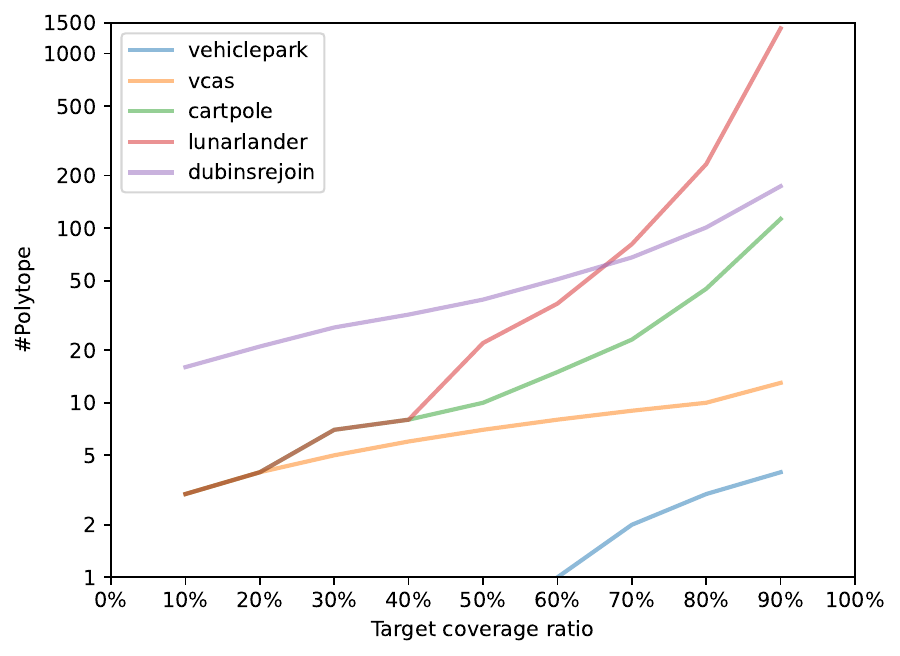}
        \caption{Polytope number}
        \label{fig:cov_subdm}
    \end{subfigure}
    \begin{subfigure}{0.47\textwidth}
        \centering
        \includegraphics[width=\textwidth]{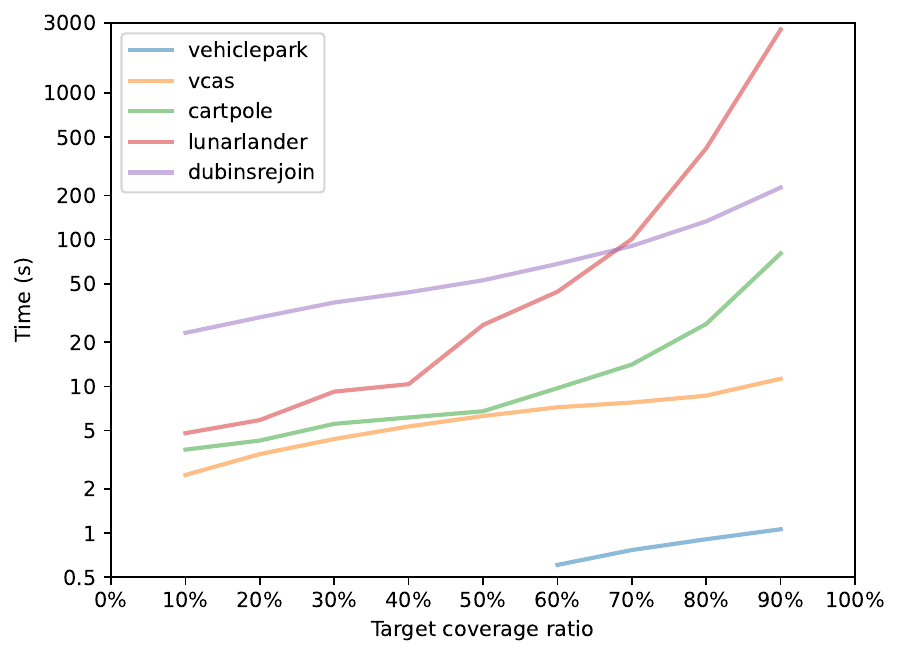}
        \caption{Runtime cost}
        \label{fig:cov_time}
    \end{subfigure}
    \caption{Effect of target coverage on preimage approximation.}
    \label{fig:target_cov}
\end{figure}

\section{Quantitative Verification for General Input Polytopes}\label{sec:generalpoly}

In Section \ref{sec:verif}, we detailed how to use our preimage under-approximation method to verify quantitative properties $(\inset, \outset, \proportion)$, where $\inset$ is a hyperrectangle. We now show how to extend this to a general polytope $\inset$.

Firstly, in Line \ref{algline:outerbox} of Algorithm \ref{alg:verify},  we derive a hyperrectangle $\indomain$ such that $\inset \subseteq \indomain$, by converting the polytope $\inset$ into its \textit{V-representation} \cite{grunbaum2003polytope}, which is a list of the vertices (extreme points) of the polytope, which can be computed as in \cite{avis1991pivoting,Barber96Qhull}. The computational expense of this operation is no greater than that of polytope exact volume computation. Once we have a V-representation, obtaining a bounding box can be achieved simply by computing the minimum and maximum value $\underline{\inpoint_i}, \overline{\inpoint_i}$ of each dimension among all vertices.  

Once we have the input region $\indomain$, we can then run the preimage refinement as usual, but with the modification that, when defining the polytopes and restricted preimages, we must additionally add the polytope constraints from $\inset$. Practically, this means that during every call to \texttt{EstimateVolume} and \texttt{ExactVolume} in Algorithm \ref{alg:verify}, we add these polytope constraints, and in Line \ref{algline:append} of Algorithm \ref{alg:genunderapprox}, we add the polytope constraints from $\inset$, in addition to those derived from the output $\outset$ and the box constraints from $\indomain_{sub}$.

\section{Proofs}\label{sec:proofs}

\propUnder*

\begin{proof}
The approximating polytope is defined as:
\begin{equation*}
    \polytope_{\indomain}(\outset) = \{\inpoint \in \mathbb{R}^{\indim}| \bigwedge_{i =1}^{\specnum} (\underline{\spec_i}(\inpoint)  \geq 0 )\wedge (\inpoint \in \indomain) \}
\end{equation*}  
while the restricted preimage is
\begin{equation*}
    \preimage_{\indomain}(\outset) = \{\inpoint \in \mathbb{R}^{\indim}| \nn(\inpoint) \in \outset \wedge \inpoint \in \indomain\} = \{\inpoint \in \mathbb{R}^{\indim}| \bigwedge_{i =1}^{\specnum} (\spec_i(\inpoint)  \geq 0 )\wedge (\inpoint \in \indomain) \}
\end{equation*}
\end{proof}

The LiRPA bound $\underline{\spec_i(\inpoint)} \leq \spec_i(\inpoint)$ holds for any $\inpoint \in \indomain$, and so we have $\bigwedge_{i =1}^{\specnum} (\underline{\spec_i}(\inpoint)  \geq 0 )\wedge \inpoint \in \indomain \implies \bigwedge_{i =1}^{\specnum} (\spec_i(\inpoint)  \geq 0 )\wedge \inpoint \in \indomain$, i.e. the polytope is an under-approximation.

\propPriority*

\begin{proof}
    We define $\polytope_{\indomain_{sub}}(\outset)|_{l}, \polytope_{\indomain_{sub}}(\outset)|_{r}$ to be the restrictions of $\polytope_{\indomain_{sub}}(\outset)$ to $\indomain_{sub}^l$ and $\indomain_{sub}^r$ respectively, that is:

    \begin{equation}
        \polytope_{\indomain_{sub}}(\outset)|_{l} = \{\inpoint \in \mathbb{R}^{\indim}| \bigwedge_{i =1}^{\specnum} (\underline{\spec_{i}}(\inpoint)  \geq 0 )\wedge (\inpoint \in \indomain_{sub}^l) \}
    \end{equation}
    \begin{equation}
        \polytope_{\indomain_{sub}}(\outset)|_{r} = \{\inpoint \in \mathbb{R}^{\indim}| \bigwedge_{i =1}^{\specnum} (\underline{\spec_{i}}(\inpoint)  \geq 0 )\wedge (\inpoint \in \indomain_{sub}^r) \}
    \end{equation}
    where we have replaced the constraint $\inpoint \in \indomain_{sub}$ with $\inpoint \in \indomain_{sub}^l$ (resp. $\inpoint \in \indomain_{sub}^r$), and $\underline{\spec_{i}}(\inpoint)$ is the LiRPA lower bound for the $i^{\textnormal{th}}$ specification on the input region $\indomain_{sub}$.

    On the other hand, we also have:
    \begin{equation}
        \polytope_{\indomain_{sub}^l}(\outset) = \{\inpoint \in \mathbb{R}^{\indim}| \bigwedge_{i =1}^{\specnum} (\underline{\spec_{l, i}}(\inpoint)  \geq 0 )\wedge (\inpoint \in \indomain_{sub}^l) \}
    \end{equation}
    \begin{equation}
        \polytope_{\indomain_{sub}^r}(\outset) = \{\inpoint \in \mathbb{R}^{\indim}| \bigwedge_{i =1}^{\specnum} (\underline{\spec_{r, i}}(\inpoint)  \geq 0 )\wedge (\inpoint \in \indomain_{sub}^r) \}
    \end{equation}
    where $\underline{\spec_{l, i}}(\inpoint)$ (resp. $\underline{\spec_{r, i}}(\inpoint)$) is the LiRPA lower bound for the $i^{\textnormal{th}}$ specification on the input region $\indomain_{sub}^l$ (resp. $\indomain_{sub}^r$). Now, 
    it is sufficient to show that $\polytope_{\indomain_{sub}^l}(\outset) \supseteq \polytope_{\indomain_{sub}}(\outset)|_{l}$ and $\polytope_{\indomain_{sub}^r}(\outset) \supseteq \polytope_{\indomain_{sub}}(\outset)|_{r}$ to prove Equation \ref{eqn:no_fragmentation}. We will now show that $\polytope_{\indomain_{sub}^l}(\outset) \supseteq \polytope_{\indomain_{sub}}(\outset)|_{l}$ (the proof for $\polytope_{\indomain_{sub}^r}(\outset) \supseteq \polytope_{\indomain_{sub}}(\outset)|_{r}$ is entirely similar).

    Before proving this result in full, we outline the approach and a sketch proof. It suffices to prove (for all $i$) that $\underline{\spec_{l, i}}(\inpoint)$ is a tighter bound than $\underline{\spec_{i}}(\inpoint)$ on $\indomain_{sub}^l$. That is, to show that $\underline{\spec_{l, i}}(\inpoint) \geq \underline{\spec_{i}}(\inpoint)$ for inputs $\inpoint$ in $\indomain_{sub}^l$, as then $\underline{\spec_{i}}(\inpoint)  \geq 0 \implies \underline{\spec_{l, i}}(\inpoint)  \geq 0$ for inputs $\inpoint$ in $\indomain_{sub}^l$, and so $\polytope_{\indomain_{sub}^l}(\outset) \supseteq \polytope_{\indomain_{sub}}(\outset)|_{l}$. The bound $\underline{\spec_{l, i}}(\inpoint)$ is tighter than $\underline{\spec_{i}}(\inpoint)$ because the input region for LiRPA is smaller for $\underline{\spec_{l, i}}(\inpoint)$, leading to tighter concrete neuron bounds, and thus tighter bound propagation through each layer of the neural network $\spec_i$.
    We present the formal proof of greater bound tightness for input and ReLU splitting in the following.

   \textbf{Input split:}
    We show $\underline{\spec_{l, i}}(\inpoint) \geq \underline{\spec_{i}}(\inpoint)$ for all $\inpoint \in \indomain_{sub}^l$ by induction (dropping the index $i$ in the following as it is not important). Recall that LiRPA generates symbolic upper and lower bounds on the pre-activation values of each layer in terms of the input (i.e. treating that layer as output), which can then be converted into concrete bounds.
    \begin{equation} \label{eqn:layer_bound}
        \lowerweight^{(j)} \inpoint + \lowerbias^{(j)} \leq \preact^{(j)}(\inpoint) \leq \upperweight^{(j)} \inpoint + \upperbias^{(j)}
    \end{equation}
    \begin{equation}  \label{eqn:layer_bound_left}
        \lowerweight^{(l, j)} \inpoint + \lowerbias^{(l, j)} \leq \preact^{(j)}(\inpoint) \leq \upperweight^{(l, j)} \inpoint + \upperbias^{(l, j)}
    \end{equation}
    where $\preact^{(j)}(\inpoint)$ are the pre-activation values for the $j^{\textnormal{th}}$ layer of the network $\spec_i$, and $\lowerweight^{(j)}, \lowerbias^{(j)}, \upperweight^{(j)}, \upperbias^{(j)}$ (resp. $\lowerweight^{(l, j)}, \lowerbias^{(l, j)}, \upperweight^{(l, j)}, \upperbias^{(l, j)}$) are the linear bound coefficients, for input regions $\indomain_{sub}$ (resp. $\indomain_{sub}^l$). 

    \textit{Inductive Hypothesis} For all layers $j = 1, ..., \numlayers$ in the network, and for all $\inpoint \in \indomain_{sub}^l$, it holds that:
    \begin{equation} 
    \lowerweight^{(j)} \inpoint + \lowerbias^{(j)} \leq
    \lowerweight^{(l, j)} \inpoint + \lowerbias^{(l, j)}  \leq \upperweight^{(l, j)} \inpoint + \upperbias^{(l, j)} \leq \upperweight^{(j)} \inpoint + \upperbias^{(j)}
    \end{equation}

    \textit{Base Case} For the input layer, we have the trivial bounds $\textbf{I}\inpoint \leq \inpoint \leq \textbf{I}\inpoint$ for both regions.

    \textit{Inductive Step} Suppose that the inductive hypothesis is true for layer $j - 1 < \numlayers$. Using the symbolic bounds in Equations \ref{eqn:layer_bound}, \ref{eqn:layer_bound_left}, we can derive concrete bounds $\concretelower^{(j - 1)} \leq \preact^{(j - 1)}(\inpoint) \leq \concreteupper^{(j - 1)}$ and $\concretelower^{(l, j - 1)} \leq \preact^{(l, j - 1)}(\inpoint) \leq \concreteupper^{(l, j - 1)}$ on the values of the pre-activation layer. By the inductive hypothesis, the bounds for region $\indomain_{sub}^l$ will be tighter, i.e. $\concretelower^{(j - 1)} \leq \concretelower^{(l, j - 1)} \leq \concreteupper^{(l, j - 1)} \leq \concreteupper^{(j - 1)}$. Now, consider the backward bounding procedure for layer $j$ as output. We begin by encoding the linear layer from post-activation layer $j - 1$ to pre-activation layer $j$ as:
    \begin{equation} \label{eqn:apx_linear}
        \weight^{(j)} \postact^{(j - 1)}(\inpoint) + \bias^{(j)} \leq \preact^{(j)}(\inpoint) \leq \weight^{(j)} \postact^{(j - 1)}(\inpoint) + \bias^{(j)}
    \end{equation}
    Then, we bound $\postact^{(j - 1)}(\inpoint)$ in terms of $\preact^{(j - 1)}(\inpoint)$ using linear relaxation. Consider the three cases in Figure \ref{fig:linear_relaxation_repeat} (reproduced from main paper), where we have a bound $\underline{c} \preact^{(j - 1)}_k(\inpoint) + \underline{d} \leq \postact^{(j - 1)}_k(\inpoint) \leq \overline{c} \preact^{(j - 1)}_k(\inpoint) + \overline{d}$, for some scalars $\underline{c}, \underline{d}, \overline{c}, \overline{d}$. If the concrete bounds (horizontal axis) are tightened, then an unstable neuron may become inactive or active, but not vice versa. It can thus be seen that the new linear upper and lower bounds on $\preact^{(j - 1)}_k(\inpoint)$ will also be tighter. 

    Substituting the linear relaxation bounds in Equation \ref{eqn:apx_linear} as in \cite{xu2021fast}, we obtain bounds of the form
    \begin{equation} 
        \lowerweight^{(j)}_j \preact^{(j - 1)}(\inpoint) + \lowerbias^{(j)}_j \leq \preact^{(j)}(\inpoint) \leq \upperweight^{(j)}_j \preact^{(j - 1)}(\inpoint) + \upperbias^{(j)}_j
    \end{equation}
    \begin{equation} 
        \lowerweight^{(l, j)}_j \preact^{(j - 1)}(\inpoint) + \lowerbias^{(l, j)}_j \leq \preact^{(j)}(\inpoint) \leq \upperweight^{(l, j)}_j \preact^{(j - 1)}(\inpoint) + \upperbias^{(l, j)}_j
    \end{equation}
    such that $\lowerweight^{(j)}_j \preact^{(j - 1)}(\inpoint) + \lowerbias^{(j)}_j \leq \lowerweight^{(l, j)}_j \preact^{(j - 1)}(\inpoint) + \lowerbias^{(l, j)}_j \leq \upperweight^{(l, j)}_j \preact^{(j - 1)}(\inpoint) + \upperbias^{(l, j)}_j \leq \upperweight^{(j)}_j \preact^{(j - 1)}(\inpoint) + \upperbias^{(j)}_j$ for all $\concretelower^{(l, j - 1)} \leq \preact^{(j - 1)}(\inpoint) \leq \concretelower^{(l, j - 1)}$, by the fact that the concrete bounds are tighter for $\indomain_{sub}^{l}$. 
    
    Finally, substituting the bounds in Equations \ref{eqn:layer_bound} and \ref{eqn:layer_bound_left} (for $\preact^{(j - 1)}$), and using the tightness result in the inductive hypothesis for $j - 1$, we obtain linear bounds for $\preact^{(j)}(\inpoint)$ in terms of of the input $\inpoint$, such that the inductive hypothesis for $j$ holds.

\textbf{ReLU split:}
We use $\indomain_{sub}^l$ and $\indomain_{sub}^r$ to denote the input subregions when fixing unstable ReLU neuron $z^{(j-1)}_k=\preact^{(j-1)}_k(\inpoint)$, i.e.,
$\indomain_{sub}^l=\{\inpoint~|~\preact^{(j-1)}_k(\inpoint) \ge 0\}$
and $\indomain_{sub}^r=\{\inpoint~|~\preact^{(j-1)}_k(\inpoint) < 0\}$.

In the following, we prove that $\underline{\spec_{l, i}}(\inpoint) \geq \underline{\spec_{i}}(\inpoint)$ for all $\inpoint \in \indomain_{sub}^l$.
Assume we fix one unstable ReLU neuron of layer $j-1$, then for all layers $1 \le m \le j-1$, for all   $\inpoint \in \indomain_{sub}^l$, it holds that:
\begin{equation} 
\lowerweight^{(m)} \inpoint + \lowerbias^{(m)} \leq
\lowerweight^{(l, m)} \inpoint + \lowerbias^{(l, m)}  \leq \upperweight^{(l, m)} \inpoint + \upperbias^{(l, m)} \leq \upperweight^{(m)} \inpoint + \upperbias^{(m)}
\end{equation}
where $\lowerweight^{(m)}=\lowerweight^{(l, m)}$, $\lowerbias^{(m)}=\lowerbias^{(l, m)}$ and same for the upper bounding parameters.

Now consider the bounding procedure for layer $j$.
The linear layer from post-activation layer $j - 1$ to pre-activation layer $j$ can be encoded as:
    \begin{equation}\label{eq:apx_linear_relu}
        \weight^{(j)} \postact^{(j - 1)}(\inpoint) + \bias^{(j)} \leq \preact^{(j)}(\inpoint) \leq \weight^{(j)} \postact^{(j - 1)}(\inpoint) + \bias^{(j)}
    \end{equation}
Consider the post activation function $\postact^{(j - 1)}(\inpoint)$ of the unstable neuron $z^{(j-1)}_k$, before splitting we have 
$\underline{c} \preact^{(j - 1)}_k(\inpoint) + \underline{d} \leq \postact^{(j - 1)}_k(\inpoint) \leq \overline{c} \preact^{(j - 1)}_k(\inpoint) + \overline{d}$, for some scalars $\underline{c}, \underline{d}, \overline{c}, \overline{d}$.
After splitting, we now have $\postact^{(j - 1)}_k(\inpoint) = \preact^{(j - 1)}_k(\inpoint)$ for $\indomain_{sub}^l$ where $\underline{c}=\overline{c}=1$, $\underline{d}=\overline{d}=0$, since the unstable neuron is fixed to be active.
By substituting the linear relaxation bounds before and after splitting in Equation \ref{eqn:apx_linear}, we obtain the bounding functions with regard to $\preact^{(j - 1)}(\inpoint)$ in the following form:
    \begin{equation} 
        \lowerweight^{(j)}_j \preact^{(j - 1)}(\inpoint) + \lowerbias^{(j)}_j \leq \preact^{(j)}(\inpoint) \leq \upperweight^{(j)}_j \preact^{(j - 1)}(\inpoint) + \upperbias^{(j)}_j
    \end{equation}
    \begin{equation} 
        \lowerweight^{(l, j)}_j \preact^{(j - 1)}(\inpoint) + \lowerbias^{(l, j)}_j \leq \preact^{(j)}(\inpoint) \leq \upperweight^{(l, j)}_j \preact^{(j - 1)}(\inpoint) + \upperbias^{(l, j)}_j
    \end{equation}
By the fact the relaxation is fixed to be exact for $\postact^{(j - 1)}_k(\inpoint)$, it holds that
$\lowerweight^{(j)}_j \preact^{(j - 1)}(\inpoint) + \lowerbias^{(j)}_j \leq \lowerweight^{(l, j)}_j \preact^{(j - 1)}(\inpoint) + \lowerbias^{(l, j)}_j \leq \upperweight^{(l, j)}_j \preact^{(j - 1)}(\inpoint) + \upperbias^{(l, j)}_j \leq \upperweight^{(j)}_j \preact^{(j - 1)}(\inpoint) + \upperbias^{(j)}_j$ for $\indomain_{sub}^{l}$.

Finally, for the bound propagation procedure of layer $L$, substituting the tightened bounding for $\preact^{(j-1)}(\inpoint)$, we obtain that $\underline{\spec_{l, i}}(\inpoint) = \lowerweight^{(l,L)} \inpoint + \lowerbias^{(l,L)}\geq \lowerweight^{(L)} \inpoint + \lowerbias^{(L)}=\underline{\spec_{i}}(\inpoint)$.
\end{proof}

\begin{figure}[t]
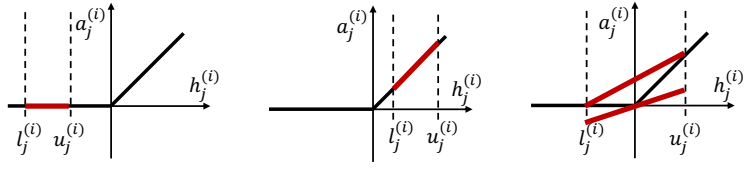

     \centering
     \begin{subfigure}[b]{0.25\textwidth}
         \centering
         \includegraphics[width=\textwidth]{Figs/negative.pdf}
         \label{fig:negative_app}
     \end{subfigure}
     \quad
     \begin{subfigure}[b]{0.25\textwidth}
         \centering
         \includegraphics[width=\textwidth]{Figs/positive.pdf}
         \label{fig:positive_app}
     \end{subfigure}
     \quad
     \begin{subfigure}[b]{0.25\textwidth}
         \centering
         \includegraphics[width=\textwidth]{Figs/unstable.pdf}
         \label{fig:unstable_app}
     \end{subfigure}
        \caption{Linear bounding functions for inactive, active, unstable ReLU neurons.}\label{fig:linear_relaxation_repeat}
\end{figure}

\corVolume*

\begin{proof}
    In each iteration of Algorithm \ref{alg:main}, we replace the polytope $\polytope_{\indomain_{sub}}(\outset)$ in a leaf subregion with two polytopes $\polytope_{\indomain_{sub}^l}(\outset), \polytope_{\indomain_{sub}^r}(\outset)$ in the DUP under-approximation. By Proposition \ref{prop:volume_guarantee}, the total volume of the two new polytopes is at least that of the removed polytope. Thus the volume of the DUP approximation does not decrease.
    
    Similarly, for ReLU splitting, we replace the polytope $\polytope_{\indomain_{sub}}(\outset)$ in a leaf subregion with two polytopes $\polytope_{\indomain_{sub}^l}(\outset), \polytope_{\indomain_{sub}^r}(\outset)$ where the relaxed bounding functions for one unstable neuron are replaced with exact linear functions, i.e., $\underline{c} \preact^{(i)}_j(\inpoint) + \underline{d} \leq \postact^{(i)}_j(\inpoint) \leq \overline{c} \preact^{(i)}_j(\inpoint) + \overline{d}$ is replaced with the exact linear function $\postact^{(i)}_j(\inpoint) = \preact^{(i)}_j(\inpoint)$ and $\postact^{(i)}_j(\inpoint) = 0$, respectively, as shown in Figure \ref{fig:linear_relaxation_repeat} (from unstable to stable).
     By Proposition \ref{prop:volume_guarantee}, the total volume of the two new polytopes is at least that of the removed polytope. Thus the volume of the DUP approximation does not decrease.
\end{proof}

\propSound*

\begin{proof}
    Algorithm \ref{alg:verify} outputs True only if, at some iteration, we have that the exact volume $\volume(\polytopeset) \geq \proportion \times \volume(\inset)$. Since $\polytopeset$ is an under-approximation to the restricted preimage $\preimage_{\inset}(\outset)$, we have that $\frac{\volume(\preimage_{\inset}(\outset))}{\volume(\inset)} \geq \frac{\volume(\polytopeset)}{\volume(\inset)} \geq \proportion$, i.e. the quantitative property $(\inset, \outset, \proportion)$ holds.
\end{proof}

\propComplete*
\begin{proof}
The proof for the soundness of Algorithm \ref{alg:verify} with ReLU splitting is similar to input splitting, with the main difference in deriving 
$\underline{\spec_{l, i}}(\inpoint) \geq \underline{\spec_{i}}(\inpoint)$ for all $\inpoint \in \indomain_{sub}$, which we have shown in the proof for Proposition \ref{prop:volume_guarantee}.

The proof for completeness is presented in the following.
Now that we have proved 
$\underline{\spec_{l, i}}(\inpoint) \geq \underline{\spec_{i}}(\inpoint)$ for all $\inpoint \in \indomain_{sub}^l$ and $\underline{\spec_{r, i}}(\inpoint) \geq \underline{\spec_{i}}(\inpoint)$ for all $\inpoint \in \indomain_{sub}^r$ after fixing a ReLU neuron $z^{(j)}_k$. 
When all unstable neurons are fixed with one activation status,
for each subregion $\indomain_{sub}$, we have $\underline{\spec_{i}}(\inpoint)=\spec_{i}(\inpoint)$.
Therefore, it holds that for any $\indomain_{sub} \subset \indomain$ where $\bigcup \indomain_{sub}=\indomain$, $(\underline{\spec_i}(\inpoint)  \geq 0 )\wedge \inpoint \in \indomain_{sub} \iff  (\spec_i(\inpoint)  \geq 0 )\wedge \inpoint \in \indomain_{sub}$, i.e., the polytope is the exact preimage.
Hence, when all unstable neurons are fixed to an activation status, we have $\polytopeset=\preimage_{\inset}(\outset)$.
Algorithm \ref{alg:verify} returns False only if 
the volume of the exact preimage 
$\frac{\volume(\preimage_{\inset}(\outset))}{\volume(\inset)} = \frac{\volume(\polytopeset)}{\volume(\inset)} < \proportion$.

\end{proof}

\paragraph{Discussion on $\nnslope$-optimization} It is worth noting that employing $\nnslope$-optimization breaks the guarantee in Proposition \ref{prop:volume_guarantee}, for two reasons. Firstly, due to the sampling and differentiable relaxation, the optimization objective is not exact volume. Secondly, gradient-based optimization cannot guarantee improvement in its objective after each update. Nonetheless, $\nnslope$-optimization can significantly improve the approximation in practice, and so we employ it in our method. 

\paragraph{Discussion on ReLU splitting} In the above, we have made the assumption that when we choose a neuron to split, all neurons in prior layers are stable over the subregion (for example, this occurs if we choose neurons to split in layer order). If this is not the case, the preactivation neuron $\preact^{(i)}_j(\inpoint)$ will not be a linear function of the input, and so we need to use linear lower/upper bounds to define the polytope under-approximations on the subregions generated by the split. This can also potentially break the guarantee in Proposition \ref{prop:volume_guarantee}. Note however that this does not affect completeness of the algorithm, as when all neurons are split, the preactivation neuron $\preact^{(i)}_j(\inpoint)$ will be a linear function of the input on every subregion.

\section{Preimage Over-Approximations} \label{sec:overapprox}

While we presented our method in the context of generating under-approximations to the preimage, it can be easily configured to instead generate over-approximations. The two key modifications are as follows.

Firstly, we need to be able to cheaply generate polytope over-approximations via linear relaxation. Analogously to Section \ref{sec:poly_gen}, we use LiRPA to propagate back a linear bound for each output constraint. However, instead of a linear lower bound, we use a linear \emph{upper} bound. More precisely, given a constraint $\lincon_i = (\linconw_i^{T} \outpoint + \linconb_i \geq 0)$, and the corresponding function $\spec_i(\inpoint) = \linconw_i^{T} \nn(\inpoint) + \linconb_i$,  we obtain upper bounds $\overline{\spec_i}(\inpoint) = \upperweightsingle_i^T \inpoint + \upperbiassingle_i$ for each $i$, such that $\spec_i(\inpoint) \geq 0 \implies \overline{\spec_i}(\inpoint) \geq 0 $ for $\inpoint \in \indomain$.  The polytope over-approximation is then given by $\polytope_{\indomain}(\outset) := \{\inpoint| \bigwedge_{i =1}^{\specnum} (\overline{\spec_i}(\inpoint)  \geq 0 )\wedge \bigwedge_{i = 1}^{\indim} \boxcon_i(\inpoint) \}$.

Secondly, instead of maximizing the volume of the DUP approximation, when over-approximating we wish to minimize the volume. This affects the following steps of our method:
\begin{itemize}
    \item For subregion selection (Section \ref{sec:branching}), we define the priority (Equation \ref{eqn:priority}) to instead be the volume of the over-approximation minus the true volume:
    \begin{equation*} 
    \textnormal{Priority}(\indomain_{sub})  = \volume(\polytope_{\indomain_{sub}}(\outset)) - \volume(\preimage_{\indomain_{sub}}(\outset)) 
    \end{equation*}
    The subregion with maximal priority then corresponds to the loosest over-approximation. 
    \item For local optimization (Section \ref{sec:local_opt}), we replace the lower bounds $\underline{\spec_i}$ with the upper bounds $\overline{\spec_i}$ constituting the over-approximation, and minimize rather than maximize the estimated volume $\widehat{\volume}(\polytope_{\indomain_{sub}}(\outset))$ with respect to $\nnslope$. 
\end{itemize}

\end{document}